\newcommand{\vinayak}[1]{\textcolor{orange}{[Vinayak: #1}]}
\newcommand{\awan}[1]{\textcolor{red}{[Jordan: #1}]}
\newtheorem{thm}{Theorem}
\newtheorem{prop}[thm]{Proposition}
\newtheorem{cor}[thm]{Corollary}
\newtheorem{lem}[thm]{Lemma}
\newtheorem{defn}[thm]{Definition} 
\newtheorem{assumption}[thm]{Assumption}
\newtheorem{ex}[thm]{Example}
\newtheorem{rem}[thm]{Remark}
\definecolor{mygreen}{RGB}{28,172,0} 
\definecolor{mylilas}{RGB}{170,55,241}
\definecolor{mygray}{gray}{0.95}
\newcommand{\mscr}[1]{\mathscr{#1}}
\newcommand{\twid}[1]{\widetilde{#1}}
\newcommand{\RR}{\mathbb{R}}
\newcommand{\EE}{\mathbb{E}}
\newcommand{\ga}{\gamma}
\newcommand{\de}{\delta}
\newcommand{\ep}{\epsilon}
\DeclareMathOperator*{\argmax}{arg\,max}
\DeclareMathOperator*{\argmin}{arg\,min}
\newcommand{\defeq}{\vcentcolon=}
\newcommand{\red}{}
\newcommand{\yellow}{\color{yellow}}
\newcommand{\iid}{\overset{\text{iid}}{\sim}}
\newcommand{\g}{g}
\begin{document}

\title{Privacy-Aware Rejection Sampling}

\author{\name Jordan Awan \email jawan@purdue.edu \\
       \addr Department of Statistics\\
       Purdue University\\
       West Lafayette, IN 47907, USA
       \AND
       \name Vinayak Rao \email varao@purdue.edu \\
       \addr Department of Statistics\\
    Purdue University  \\
      West Lafayette, IN 47907, USA}
      

\editor{}
\firstpageno{1}

\maketitle

\begin{abstract}
    Differential privacy (DP) offers strong theoretical privacy guarantees, though implementations of DP mechanisms may be vulnerable to side-channel attacks, such as timing attacks. When sampling methods such as MCMC or rejection sampling are used to implement a mechanism, the runtime {\red can leak private information}. We characterize the additional privacy cost due to the runtime of a rejection sampler in terms of both $(\ep,\de)$-DP as well as $f$-DP. We also show that unless the acceptance probability is constant across databases, the runtime of a rejection sampler does not satisfy $\ep$-DP for any $\ep$. We show that there is a similar breakdown in privacy with adaptive rejection samplers. We propose three modifications to the rejection sampling algorithm, with varying assumptions, to protect against timing attacks by making the runtime independent of the data. The modification with the weakest assumptions is an approximate sampler, introducing a small increase in the privacy cost, whereas the other modifications give perfect samplers.  We also use our techniques to develop an adaptive rejection sampler for log-H\"{o}lder densities, which also has data-independent runtime. We give several examples of DP mechanisms that fit the assumptions of our methods and can thus be implemented using our samplers.
\end{abstract}

\begin{keywords}
differential privacy, side-channel, timing attack, perfect sampler, exponential mechanism
\end{keywords}
\section{Introduction}

As more data is collected, analyzed, and published by researchers, companies, and government agencies, concerns about the privacy of the participating individuals have become more prominent \citep{lane2014privacy}. While there have been many methods of statistical disclosure control to combat this problem \citep{hundepool2012statistical}, differential privacy (DP) \citep{dwork2006calibrating} has arisen as the state-of-the-art framework for privacy protection, and is currently being implemented by Google \citep{erlingsson2014rappor}, Apple \citep{tang2017privacy}, Microsoft \citep{ding2017collecting}, and the US Census \citep{abowd2018us}. Differential privacy is based on a notion of plausible deniability, and requires the introduction of additional noise, beyond sampling, into the analysis procedure. Given the output of a DP mechanism, an adversary cannot determine with high probability whether any particular individual participated in the dataset \citep{wasserman2010statistical}. 

Because of the formal nature of DP, implementations of the mechanisms must be very careful to prevent unintentional privacy leaks through side-channels. Side-channel attacks have been a long-standing problem in computer systems, and  may consist of the execution time, power consumption, or memory usage of the system, to name a few \citep{joy2011side,nilizadeh2019diffuzz}. With differential privacy, the system can be made black-box to remove some of these side-channels, but may still be susceptible to timing attacks. {\red Such a side-channel may be present if the DP mechanism is part of a query-response framework, where users submit queries and the curator replies with a DP response; in this model, the adversary may measure the time between submitting the query and receiving the answer, and use this information as part of his attack.} PINQ \citep{mcsherry2009privacy} and Airavat \citep{roy2010airavat} were two of the earliest DP implementations, but were shown by \citet{haeberlen2011differential} to be vulnerable to timing attacks. FUZZ \citep{haeberlen2011differential} and GUPT \citep{mohan2012gupt} avoid timing attacks by working with simple queries for which the worst-case computational time can be determined. This solution works for simple DP tasks, but is nontrivial for complex DP mechanisms.

One of the most common and powerful DP mechanisms is the exponential mechanism \citep{mcsherry2007mechanism} which results in an unnormalized density of the form $\exp(g_D(x))$ that must be sampled from, {\red where $g_D$ is some function that depends on the database $D$}. The exponential mechanism has been widely used to tackle problems such as principal component analysis \citep{chaudhuri2013near,kapralov2013differentially,awan2019benefits},  $K$-means clustering, \citep{feldman2009private}, convex optimization \citep{bassily2014private,bassily2014privateRevisited}, robust regression \citep{asi2020near}, linear and quantile regression \citep{reimherr2019kng}, synthetic data \citep{snoke2018pmse}, and Bayesian data analysis \citep{wang2015privacy,minami2016differential,zhang2016differential,dimitrakakis2017differential} to name a few.

A challenge however is that for functions $g_D(x)$ encountered in practice, the unnormalized density $\exp(g_D(x))$ is often difficult to sample from.
In statistics and machine learning, there are many computational techniques to produce either exact or approximate samples from such distributions, including Markov chain Monte Carlo (MCMC), rejection sampling, and approximate Bayesian computing. However, there are two sources of privacy leaks when using these computational sampling methods: 1) when using approximate samplers, the resulting sample does not exactly follow the target distribution, with the error in the approximation resulting in an increased privacy risk,  2) with either an approximate or exact sampler, if the runtime of the algorithm depends on the database, then this side-channel {\red may leak private information} \citep{haeberlen2011differential}.

 We will consider the runtime of the algorithm as an additional output accessible to an adversary, and we will require that both the official output and the runtime jointly satisfy differential privacy. As \citet{haeberlen2011differential} point out, the simplest solution is to make the runtime independent of the dataset. In this paper we propose different modifications, under different assumptions, which produce rejection samplers with data-independent runtime, and are thus immune to timing attacks. 
 
 ~\\
 {\noindent \bf Contributions } 
First, we quantify the privacy risk of rejection and adaptive rejection sampling without any privacy-preserving modifications. As a properly implemented rejection sampler results in samples with distribution equal to the target, the only privacy concern is the runtime, which varies for different databases. We characterize the privacy risk {\red due to the runtime} of a simple rejection sampler in terms of both $(\ep,\de)$-DP and $f$-DP \citep{dong2022gaussian}. We also show that {\red the runtime of} a simple rejection sampler does not satisfy $\ep$-DP for any finite $\ep$ unless the acceptance rate is constant across databases. We similarly show that {\red the runtime of} an adaptive rejection sampler does not satisfy $\ep$-DP unless acceptance probabilities across databases converge in terms of a certain series. 

Given the increased privacy risk due to the runtime, we propose several modifications to rejection samplers, which make the runtime independent of the database: 1) choose the number of iterations to run the sampler ahead of time, based on a lower bound on the acceptance probability, 2) introduce an additive wait-time based on a worst-case dataset, 3) use squeeze functions to add an implicit wait-time. 
We also propose an adaptive rejection sampler with data-independent runtime, which can be applied to any log-H\"{o}lder density. The adaptive sampler is a modification of the (nearly) minimax optimal sampler from \citet{achddou2019minimax}, using the technique of squeeze functions. Finally, we give examples of the exponential mechanism which satisfy the assumptions of our methods.

 ~\\
{\noindent \bf Related work }
%
%
{\red Often side-channels are handled using more relaxed metrics than DP, such as min-entropy \citep{smith2009foundations}. However, the point of view of this paper is that if the dataset in question is judged to  require the protection of differential privacy, then we must ensure that the \emph{all channels} are protected in the DP framework. Thus, while for other applications it may be appropriate to use a weaker protection for side channels, in DP applications, the runtime must also satisfy DP. See \citet{haeberlen2011differential} for a similar discussion.}

Besides timing side-channels, there are other notable side-channel attacks that have been effective against DP implementations. \citet{haeberlen2011differential} showed that when the privacy budget is chosen based on the database, that the budget is another side-channel. {\red \citet{wagh2018differentially} consider the privacy cost of RAM access, and propose a differential privacy regime to formally protect the RAM access. }\citet{dodis2012differential} and \citet{garfinkel2020randomness} explore the concerns of using pseudo-random number generators in the implementation of DP systems. \citet{mironov2012significance} showed that when implementing DP mechanisms with floating point arithmetic, privacy can be arbitrarily compromised by the artifacts in the least significant bit. \citet{ilvento2020implementing} provide an implementation of the exponential mechanism on finite state spaces that is immune to the floating point attacks, but which is admitted to be susceptible to timing attacks.

A different approach to sampling the exponential mechanism is using MCMC techniques, and there have been some prior works characterizing the additional privacy cost of these approximate samplers. 
Usually convergence of MCMC methods is characterized in terms of total variation distance, and \citet{minami2016differential} showed that these guarantees can be imported to produce approximate DP samples with an increased `delta' in a fixed number of iterations. \citet{ganesh2020faster} expanded upon the results of \citet{vempala2019rapid} to show that Langevin MCMC converges in R\'{e}nyi divergence, which allows for the quantification of the privacy loss by sampling in terms of R\'{e}nyi DP. R\'{e}nyi divergences are much stronger than total variation, and have been used in various definitions of DP \citep{mironov2017Renyi,bun2016concentrated,bun2018composable}. \citet{minami2016differential} also study Langevin MCMC, but characterize the privacy cost in terms of $(\ep,\de)$-DP. {\red \citet{seeman2021exact} develop an exact sampler for the exponential mechanism based on an MCMC procedure with artificial atoms, however, they acknowledge that their approach does not protect against timing side-channels. }
To our knowledge, there has been no prior work quantifying the privacy risk of rejection sampling, or proposing rejection samplers with data-independent runtime.

\section{Background and notation}
In this section, we review the necessary background on differential privacy and rejection sampling. We also set the notation for the rest of the paper. 

Let $X$ and $Y$ be random variables on a measurable space $(\mscr Y,\mscr F)$, with corresponding probability measures $\mu_X$ and $\mu_Y$. The \emph{max-divergence} of $Y$ with respect to $X$ is {\red $D_\infty(Y||X) = \sup_{B\in \mscr F} \log \left(\frac{\mu_Y(B)}{\mu_X(B)}\right).$} 
If $\mu_X$ dominates $\mu_Y$, then $D_{\infty}(Y||X) = \sup_{y\in \mscr Y} \log \frac{d\mu_Y}{d\mu_X}(y)$, where $\frac{d\mu_Y}{d\mu_X}$ is the Radon-Nikodym derivative of $\mu_Y$ with respect to $\mu_X$. The \emph{symmetric max-divergence} is $D^S_{\infty}(X,Y)\defeq \max\{D_{\infty}(X||Y),D_\infty(Y||X)\}$.

For a distribution $M$, we typically write $\twid \pi(x)$ for an unnormalized density of $M$, $\pi(x) = \twid \pi(x)/\int \twid\pi(x) \ dx$, and $\g=\log(\twid \pi)$ (equivalently, $\twid \pi(x) = \exp(\g(x))$. We write $U(x)$ to denote a density that upper bounds $\twid \pi$ as $\twid \pi(x)\leq c_U U(x)$ for some constant $c_U$. Similarly, we write $L(x)$ for a density that lower bounds $\twid \pi$ as $c_L L(x)\leq \twid \pi(x)$ {\red for a constant $c_L$}. In rejection sampling, $U$ is called the \emph{proposal distribution}, and $L$ is the \emph{squeeze function}.

\subsection{Differential privacy}
Differential privacy (DP), introduced in \citet{dwork2006calibrating}, is a framework to characterize the privacy risk of a given algorithm, and offers techniques to design mechanisms which limit privacy loss. DP methods require the introduction of additional randomness, beyond sampling, in order to offer a notion of plausible deniability. Given the output of a DP mechanism, it is difficult for an adversary to determine whether a particular individual participated in the dataset or not. While an idealized algorithm may be proven to be differentially private, to characterize the actual privacy cost of a given implementation, one must consider all side-channels such as the runtime as part of the DP output \citep{haeberlen2011differential}.

\begin{defn}[Privacy Mechanism]
Given a metric space $(\mscr D,d)$, which represents the set of possible databases, a set of probability measures $\{M_D\mid D\in \mscr D\}$ on a common space $\mscr Y$ is called a \emph{privacy mechanism}.
\end{defn}
The space $\mscr D$ represents the space of possible databases, and it is common to take $\mscr D = \mscr X^n$ for some set $\mscr X$, with $\mscr X$ representing the possible contributions of one individual in the database.
In that case, the metric $d$ is often chosen to be the Hamming distance, so that $d(D,D')\leq 1$ represents that $D$ and $D'$ are {\em adjacent} databases, differing in only one individual's contribution.

When implementing a privacy mechanism, we publish one sample from $M_D$, which satisfies some form of privacy. 

\begin{defn}[$(\ep,\de)$-DP:\citealp{dwork2006calibrating}]
Given a metric space $(\mscr D,d)$, ${\red \ep\geq 0}$ and ${\red \de\in[0,1]}$, a privacy mechanism $\{M_D\}$ on the space $\mscr Y$ satisfies $(\ep,\de)$-differential privacy if for all measurable sets $B\in \mscr Y$ and all $d(D,D')\leq 1$,
\[M_D(B)\leq \exp(\ep) M_{D'}(B)+\de.\]
\end{defn}

The values $\ep$ and $\de$ are called the privacy parameters, which capture the privacy risk for the given mechanism. Smaller values of $\ep$ and $\de$ give stronger privacy guarantees. Typically, $\ep$ is chosen to be a small constant such as $1$ or $.1$, whereas $\de\ll 1/n$. In the case where $\de=0$, we call $(\ep,0)$-DP ``pure differential privacy,'' and write $\ep$-DP. A mechanism satisfying $\ep$-DP is equivalent to requiring that  $D^S_\infty(M_D||{\red M_{D'}})\leq \ep$ for all $d(D,D')\leq 1$, where $D^S_\infty$ is the symmetric max-divergence. 

While we phrase most of our results in terms of $(\ep,\de)$-DP, another useful formulation of DP is $f$-DP \citep{dong2022gaussian}, which is expressed in terms of hypothesis tests. $f$-DP is based on bounding the receiver-operator curve (ROC) or tradeoff function when testing between two adjacent databases, given the output of a mechanism. For two probability distributions $P$ and $Q$, the \emph{tradeoff function} is the smallest type-II error as a function of the type-I error. Formally, the \emph{tradeoff function} for $P$ and $Q$ is $T(P,Q):[0,1]\rightarrow [0,1]$, which is defined as $T(P,Q)(\alpha) = {\red \inf_\phi} \{1-\EE_Q (\phi) \mid \EE_P(\phi)\leq \alpha\}$, where the infinimum is over all possible tests $\phi$. Being equivalent to ROC, the tradeoff function captures the difficulty of distinguishing between $P$ and $Q$. A function $f:[0,1]\rightarrow [0,1]$ is a tradeoff function if and only if $f$ is convex, continuous, decreasing, and $f(x) \leq 1-x$ for all $x \in [0,1]$ \citep[Proposition 1]{dong2022gaussian}.

\begin{defn}[$f$-DP: \citealp{dong2022gaussian}]\label{def:fDP}
 Let $f$ be a tradeoff function. A mechanism $M$ on the metric space $(\mscr D,d)$ is $f$-DP if \[T(M_D,M_{D'})(\alpha)\geq f(\alpha)\quad \forall \alpha \in [0,1],\] for all $D,D'\in \mscr D$ such that $d(D,D')\leq 1$.
\end{defn}

See Figure \ref{fig:tradeoffLEFT} for examples of tradeoff functions which do and do not satisfy $f$-DP for a particular $f$. Without loss of generality we can assume that $f$ is symmetric: $f(\alpha) = f^{-1}(\alpha)$, where $f^{-1}(\alpha) = \inf \{t\in [0,1]\mid f(t)\leq \alpha\}$. This is due to  \citet[Proposition 2]{dong2022gaussian}, which states that for a given $f$ and a  mechanism $M$ that is $f$-DP, there exists a symmetric $f^*\geq f$ such that $M$ is $f^*$-DP. 

  \begin{figure}
      \centering
      \begin{subfigure}[t]{.48\textwidth}
      \includegraphics[width=\linewidth]{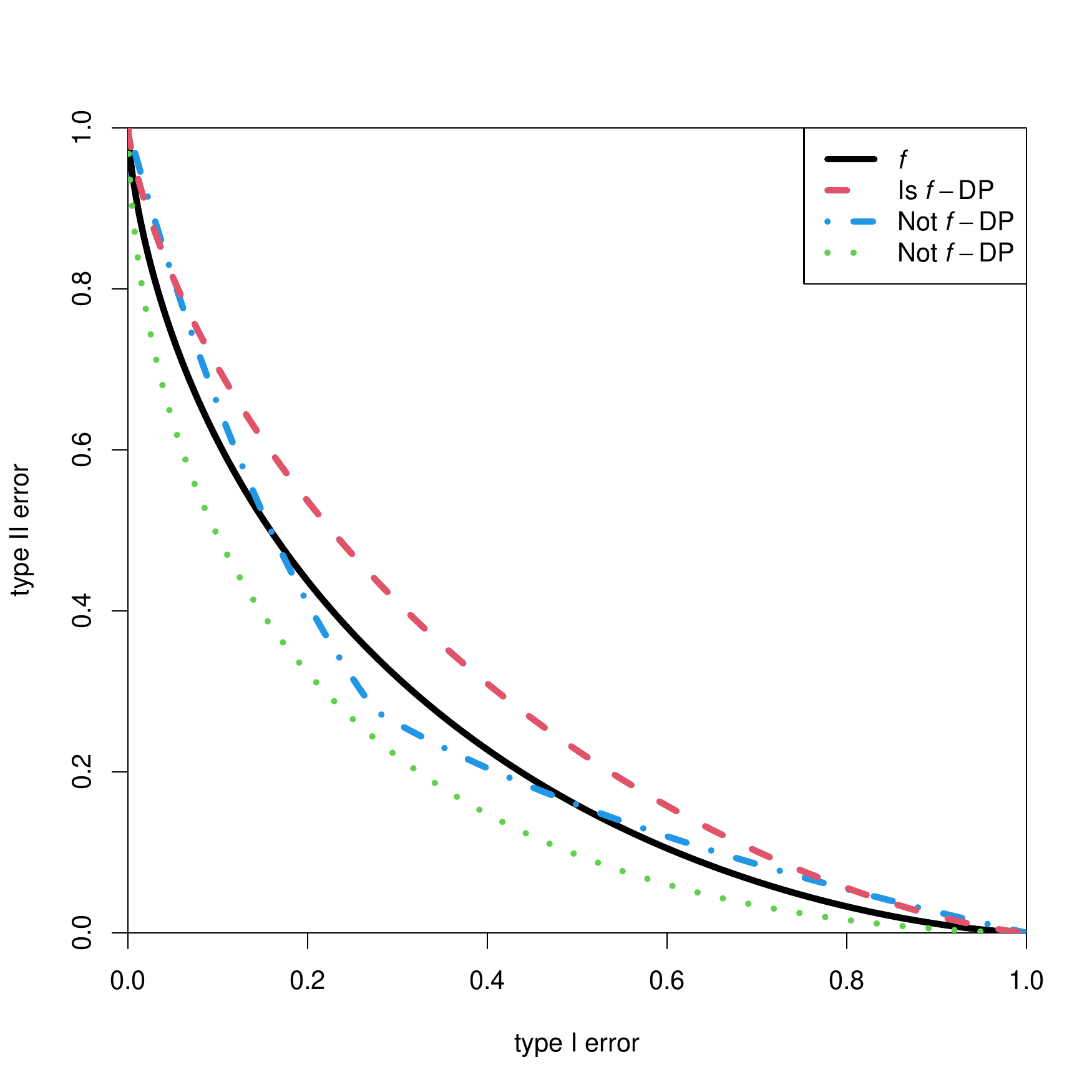}
      \caption{A plot of three examples of $T(M(D),M(D'))$. Only the red, dashed tradeoff curve satisfies $f$-DP.}
            \label{fig:tradeoffLEFT}
      \end{subfigure}
      \begin{subfigure}[t]{.48\textwidth}
      \includegraphics[width=\linewidth]{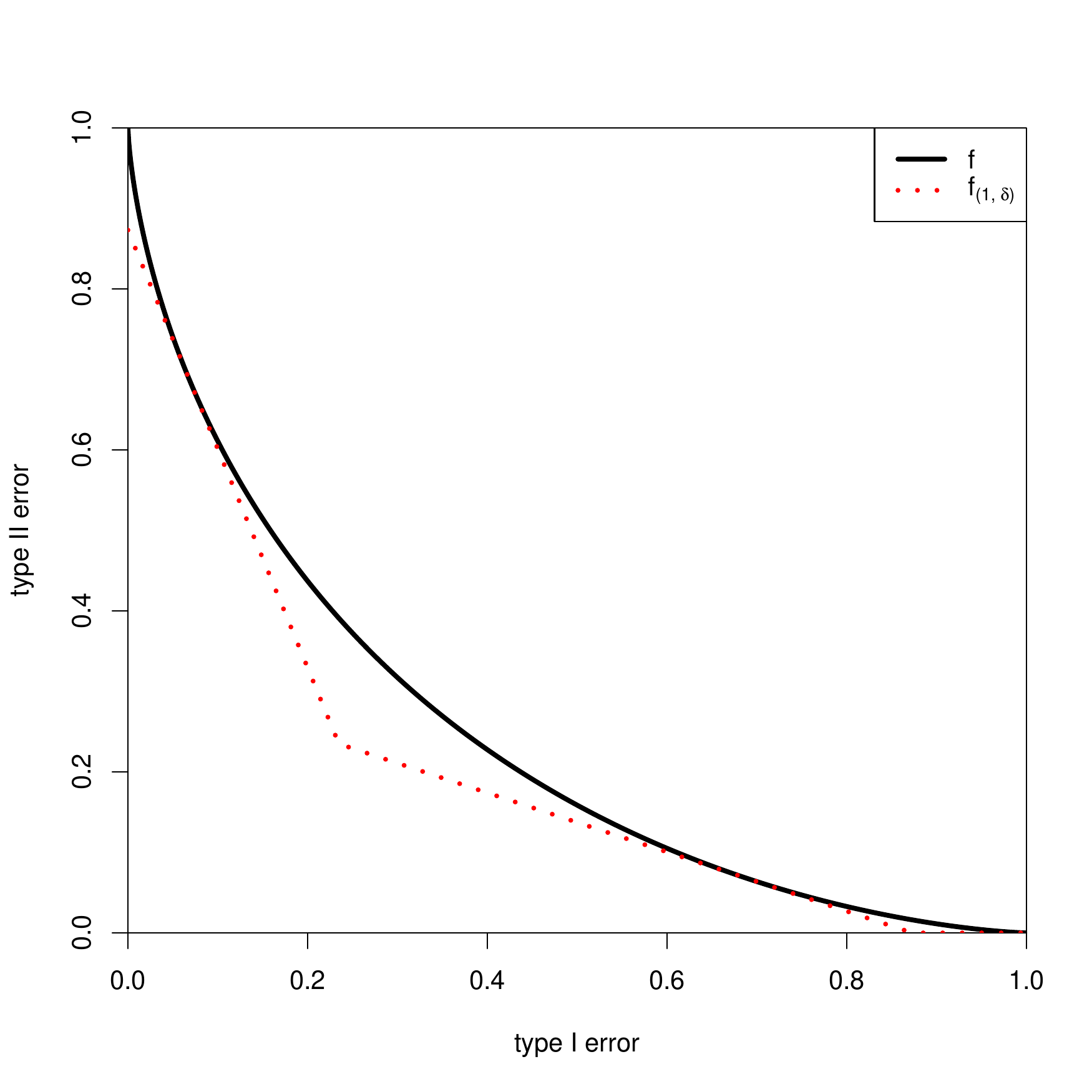}
      \caption{A tradeoff function, as well as its conversion to $(1,\de)$-DP, where $\de\approx0.127$.}
      \label{fig:f_to_DP}
      \end{subfigure}
      \caption{Examples of tradeoff functions, and the relation between $f$-DP and $(\ep,\de)$-DP. }
      \label{fig:tradeoff}
  \end{figure}


It turns out that $(\ep,\de)$-DP is a special case of $f$-DP, where $f$ is taken to be a particular piecewise linear function.
Specifically, let ${\red \ep\geq 0}$ and ${\red \de\in[0,1]}$, and define $f_{\ep,\de}(\alpha) = \max\{0,1-\de-\exp(\ep)\alpha,\exp(-\ep)(1-\de-\alpha)\}$. Then a mechanism $M$ satisfies $(\ep,\de)$-DP if and only if it satisfies $f_{\ep,\de}$-DP \citep[Proposition 3]{dong2022gaussian}. The following proposition, based on \citet[Propositions 5 and 6]{dong2022gaussian}, gives a simple conversion between $f$-DP and $(\ep,\de)$-DP, by determining the linear functions which lower bound $f$. 

\begin{prop}\label{prop:ftoDP}
Let $f$ be a symmetric tradeoff function. If a mechanism satisfies $f$-DP, then it satisfies $(\ep,\de)$-DP provided that $(1-\de)-\exp(\ep)\alpha\leq f(\alpha)$ for all $\alpha\in[0,1]$. 
\end{prop}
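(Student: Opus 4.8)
The plan is to reduce the claim to a pointwise comparison of tradeoff functions and then invoke the characterization of $(\ep,\de)$-DP as $f_{\ep,\de}$-DP stated just before the proposition. Recall that a mechanism satisfies $(\ep,\de)$-DP if and only if it satisfies $f_{\ep,\de}$-DP, where $f_{\ep,\de}(\alpha)=\max\{0,\,1-\de-\exp(\ep)\alpha,\,\exp(-\ep)(1-\de-\alpha)\}$. Since $f$-DP means $T(M_D,M_{D'})\geq f$ for all adjacent $D,D'$, it suffices to show that the hypotheses force $f\geq f_{\ep,\de}$ pointwise; then $T(M_D,M_{D'})\geq f\geq f_{\ep,\de}$, which gives $f_{\ep,\de}$-DP and hence $(\ep,\de)$-DP.

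Thus I would verify $f(\alpha)\geq f_{\ep,\de}(\alpha)$ by bounding $f$ below by each of the three expressions inside the maximum. The bound $f(\alpha)\geq 0$ is immediate since every tradeoff function maps into $[0,1]$, and the bound $f(\alpha)\geq 1-\de-\exp(\ep)\alpha$ is precisely the stated hypothesis. The only nontrivial piece is the reflected line $f(\alpha)\geq \exp(-\ep)(1-\de-\alpha)$, and this is where the symmetry assumption $f=f^{-1}$ must be used. The key geometric observation is that the line $y=\exp(-\ep)(1-\de-x)$ is exactly the reflection of the line $y=1-\de-\exp(\ep)x$ across the diagonal $y=x$; so since the graph of a symmetric $f$ is invariant under this reflection and lies above the first line, it should also lie above its reflection.

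To make this rigorous I would argue by contradiction. Suppose there is $\alpha_0$ with $f(\alpha_0)<\exp(-\ep)(1-\de-\alpha_0)$, and set $\beta_0=f(\alpha_0)\in[0,1]$. Rearranging this strict inequality gives $\alpha_0<1-\de-\exp(\ep)\beta_0$. On the other hand, from $f(\alpha_0)=\beta_0$ and the definition $f^{-1}(\beta_0)=\inf\{t:f(t)\leq\beta_0\}$, the monotonicity of $f$ yields $f^{-1}(\beta_0)\leq\alpha_0$, and symmetry $f=f^{-1}$ then gives $f(\beta_0)\leq\alpha_0$. Combining these, $f(\beta_0)\leq\alpha_0<1-\de-\exp(\ep)\beta_0$, which contradicts the hypothesis $f(\alpha)\geq 1-\de-\exp(\ep)\alpha$ evaluated at $\alpha=\beta_0$. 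This establishes the third bound and completes the comparison $f\geq f_{\ep,\de}$.

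The step I expect to be the main obstacle is the symmetry argument in the last paragraph: getting the inversion inequalities right, in particular deriving $f(\beta_0)\leq\alpha_0$ from $f(\alpha_0)=\beta_0$ via $f=f^{-1}$ while being careful with the infimum in the definition of $f^{-1}$ and the possibility that $f$ has flat or vertical pieces, is the only place that needs more than a one-line verification. The remaining two bounds reduce to reading off the hypothesis and the nonnegativity of tradeoff functions.
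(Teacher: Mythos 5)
Your proof is correct and takes essentially the same route as the paper's: reduce to the pointwise comparison $f \geq f_{\ep,\de}$, read off the bound $f(\alpha)\geq 1-\de-\exp(\ep)\alpha$ from the hypothesis, and obtain the reflected line $\exp(-\ep)(1-\de-\alpha)$ from symmetry. The paper's own proof is a two-line appeal to "symmetry of $f$ and $f_{\ep,\de}$"; your contradiction argument via $f^{-1}(\beta_0)\leq \alpha_0$ is simply a rigorous filling-in of that step (and in fact needs only the definition of the infimum, not monotonicity), so it is a faithful, more detailed version of the same proof.
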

\begin{proof}
We need to show that $f_{\ep,\de}(\alpha)\leq f(\alpha)$ for all $\alpha \in [0,1]$. By symmetry of $f$ and $f_{\ep,\de}$, the condition stated is sufficient.
\end{proof}
If the tradeoff function $f$ makes the inequalities of Definition \ref{def:fDP} tight, then by Proposition \ref{prop:ftoDP} the tightest $(\ep,\de)$-DP guarantee takes a tangent line of $f$ and sets $(1-\de)$ to be the $y$-intecept and $-\exp(\ep)$ to be its slope. This approach gives a precise conversion from $f$-DP to $(\ep,\de)$-DP, which we use in Theorem \ref{thm:privacyCost}. In fact, there is a stronger duality between $f$-DP and a family of $(\ep,\de(\ep))$-DP characterizations, described in \citet[Propositions 5 and 6]{dong2022gaussian}. Figure \ref{fig:f_to_DP} illustrates the conversion from $f$-DP to $(\ep,\de)$-DP. 

An important property of both $(\ep,\de)$-DP and $f$-DP is that it is robust to \emph{post-processing}. That is, if a mechanism satisfies DP, then applying any  deterministic or randomized algorithm to the output cannot degrade the DP guarantee. This property is related to data processing inequalities.

\begin{prop}[Post-processing: \citealp{dwork2014algorithmic,dong2022gaussian}]\label{prop:postprocess}
Let $M$ be a mechanism with output space $\mscr Y$, $f$ a tradeoff function, $\ep\geq 0$, and $\de\in [0,1]$. Let $\mathrm{Proc}$ be a potentially randomized mapping from $\mscr Y$ to $\mscr Z$. Then 
\begin{enumerate}
    \item if $M$ satisfies $(\ep,\de)$-DP, then $\mathrm{Proc}\circ M$ satisfies $(\ep,\de)$-DP;
    \item if $M$ satisfies $f$-DP, then $\mathrm{Proc}\circ M$ satisfies $f$-DP. 
\end{enumerate}
\end{prop}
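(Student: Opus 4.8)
The plan is to handle the two parts in parallel, first reducing randomized post-processing to the deterministic case and then invoking the defining inequalities. Throughout I write $\mathrm{Proc}$ as a Markov kernel $K(y,\cdot)$, a probability measure on $\mscr Z$ for each $y\in\mscr Y$, so that $(\mathrm{Proc}\circ M)_D(B)=\int_{\mscr Y}K(y,B)\,dM_D(y)$ for every measurable $B\subseteq\mscr Z$.

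For part 1, I would first treat a deterministic map $h:\mscr Y\to\mscr Z$. For any measurable $B\subseteq\mscr Z$ the preimage $h^{-1}(B)$ is measurable in $\mscr Y$, so applying the $(\ep,\de)$-DP inequality of $M$ to the set $h^{-1}(B)$ immediately gives $(h\circ M)_D(B)=M_D(h^{-1}(B))\leq\exp(\ep)M_{D'}(h^{-1}(B))+\de=\exp(\ep)(h\circ M)_{D'}(B)+\de$. To extend this to a genuinely randomized $\mathrm{Proc}$, I would represent the kernel through an independent seed: write $\mathrm{Proc}(y)=\psi(y,W)$ for a deterministic $\psi$ and an auxiliary variable $W$ drawn independently of the data. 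The product mechanism $M_D\otimes P_W$ still satisfies $(\ep,\de)$-DP, since for any measurable $B$ with sections $B_w$, Fubini and the $(\ep,\de)$ bound applied sectionwise give $\int M_D(B_w)\,dP_W(w)\leq\int[\exp(\ep)M_{D'}(B_w)+\de]\,dP_W(w)=\exp(\ep)(M_{D'}\otimes P_W)(B)+\de$, where $\int dP_W=1$ keeps the additive term at $\de$. Composing with the deterministic map $\psi$ then finishes part 1.

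For part 2, the engine is the data-processing inequality for tradeoff functions: $T(\mathrm{Proc}(P),\mathrm{Proc}(Q))\geq T(P,Q)$ pointwise for any $P,Q$. I would prove this by transporting tests backward through the kernel. Given a test $\phi:\mscr Z\to[0,1]$, define $\psi(y)=\int_{\mscr Z}\phi(z)\,K(y,dz)$, a test on $\mscr Y$ with identical errors: $\EE_P\psi=\EE_{\mathrm{Proc}(P)}\phi$ and $1-\EE_Q\psi=1-\EE_{\mathrm{Proc}(Q)}\phi$. Thus every (type-I, type-II) error pair achievable when testing $\mathrm{Proc}(P)$ versus $\mathrm{Proc}(Q)$ is also achievable when testing $P$ versus $Q$, so the infimum defining $T(\mathrm{Proc}(P),\mathrm{Proc}(Q))(\alpha)$ is taken over a smaller collection of tests and is therefore at least $T(P,Q)(\alpha)$. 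Applying this with $P=M_D$ and $Q=M_{D'}$, and combining with $T(M_D,M_{D'})\geq f$, yields $T((\mathrm{Proc}\circ M)_D,(\mathrm{Proc}\circ M)_{D'})\geq f$, which is the claim.

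I expect the main obstacle to be the rigorous handling of randomized post-processing rather than any analytic difficulty. In part 1 this is the measure-theoretic step of realizing $\mathrm{Proc}$ as a deterministic function of an independent seed, where mild regularity of the spaces (e.g. standard Borel) is used, together with the sectionwise Fubini argument; in part 2 it is checking that the backward-transported test $\psi$ is measurable and error-preserving. Once these bookkeeping points are secured, both conclusions follow directly from the definitions of $(\ep,\de)$-DP and $f$-DP.
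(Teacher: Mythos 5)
Your proposal is correct, but note that the paper itself does not prove this proposition at all: it is imported by citation from \citet{dwork2014algorithmic} (for part 1) and \citet{dong2022gaussian} (for part 2), so there is no in-paper argument to compare against. What you have written is essentially a self-contained reconstruction of the standard proofs in those references: the preimage argument plus reduction of a randomized map to a deterministic one is the Dwork--Roth route for $(\ep,\de)$-DP, and the backward transport of tests through the kernel, $\psi(y)=\int_{\mscr Z}\phi(z)\,K(y,dz)$, is exactly the data-processing inequality underlying the $f$-DP post-processing lemma in Dong et al. Both parts are sound, including the direction of the inequality in part 2 (the processed pair admits fewer achievable error pairs, so its tradeoff function is pointwise larger). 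One refinement worth knowing: the seed representation $\mathrm{Proc}(y)=\psi(y,W)$ in part 1, which forces you to assume standard Borel structure, can be avoided entirely by working with the kernel directly via the layer-cake formula. Since $y\mapsto K(y,B)$ takes values in $[0,1]$, one has
\begin{equation*}
(\mathrm{Proc}\circ M)_D(B)=\int_{\mscr Y}K(y,B)\,dM_D(y)=\int_0^1 M_D\bigl(\{y: K(y,B)>t\}\bigr)\,dt\leq \int_0^1\bigl[e^{\ep}M_{D'}(\{y:K(y,B)>t\})+\de\bigr]\,dt,
\end{equation*}
which equals $e^{\ep}(\mathrm{Proc}\circ M)_{D'}(B)+\de$; this applies the DP inequality only to genuine measurable sets and needs no regularity of the spaces, so the "main obstacle" you anticipate dissolves.
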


\subsection{Exponential Mechanism}
Having established the definitions of both $(\ep,\de)$-DP and $f$-DP, there remains the question of how to construct a privacy mechanism for a given statistical task. A general and powerful technique, and one that will be the focus of this paper, is the \emph{exponential mechanism}~\citep{mcsherry2007mechanism}.
%
Given a utility function $\g_D$, where large values of $\g_D$ indicate higher utility,  the {exponential mechanism} samples from the unnormalized density $\twid \pi_D(x) = \exp(\g_D(x))\pi_0(x)$, where $\pi_0$ is a base measure. This mechanism satisfies $(2/\Delta,0)$-DP where  $\Delta$ is the \emph{sensitivity} of $\g_D$:
\[\Delta \geq \sup_{d(D,D')\leq 1}\sup_x |\g_D(x)-\g_{D'}(x)|.\]
 Often $\pi_0$ is chosen to be Lebesgue measure, but it can also be chosen to be a probability measure similar to a prior \citep{wang2015privacy,minami2016differential,dimitrakakis2017differential}. In infinite-dimensional function spaces, there is no translation-invariant measure, so a nontrivial base measure must be used \citep{awan2019benefits}.
Many statistical tasks can be expressed as finding the solution to a minimization or maximization problem of some objective function (e.g., log-likelihood function, sum of squared error, or a general empirical risk function). For these tasks, it is natural to choose the utility function in the exponential mechanism to be some transformation of such an objective function. For example, \citet{reimherr2019kng} show that when an objective function $\xi_D(x)$ is strongly convex, sampling from the exponential mechanism with utility function $g(x)=-\lVert \nabla \xi_D(x)\rVert$ results in an estimator which satisfies $x^*=\arg\min_x \xi_D(x)+O_p(n^{-1})$. 
Though the exponential mechanism was designed with $(\ep,0)$-DP in mind, it has been shown that when the utility function satisfies additional assumptions such as concavity, Lipschitz continuity, or strong concavity, the exponential mechanism may satisfy $(\ep,\de)$-DP \citep{minami2016differential,dimitrakakis2017differential} even when the sensitivity $\Delta$ is infinite.

While the exponential mechanism is very flexible and offers high utility guarantees, sampling $\exp(g_D(x))$ exactly is generally very challenging. 
While specific implementations of the exponential mechanism sometimes have efficient sampling schemes (e.g., \citealp{bassily2014private,bassily2014privateRevisited,asi2020instance,asi2020near}), in general, more sophisticated computational sampling techniques are needed. For example, \citet{chaudhuri2012near,chaudhuri2013near} and \citet{awan2019benefits} use a Gibbs sampler to implement the exponential mechanism in the application of principal component analysis, using heuristics to argue convergence. \citet{reimherr2019kng} use MCMC implementations of their proposed $K$-norm gradient (KNG) mechanism, but leave considerations of the cost of the implementation for future work. \citet{snoke2018pmse} propose an instance of the exponential mechanism for synthetic data, which they sample using the Metropolis algorithm, without considering the privacy cost of the sampler.

\subsection{Rejection sampling}
Given the structure of the unnormalized density, sampling from $\exp(g_D(x))$ is often well suited to rejection sampling. Given an unnormalized target density $\pi(x)\propto \twid \pi(x)=\exp(g(x))$, which is difficult to sample from, and a simpler {\em proposal} density $U(x)$ which satisfies $\twid \pi(x)\leq c U(x)$ for some $c$ and all $x$, a rejection sampler {\red draws} $X\sim U(x)$ and accepts the sample with probability $\twid\pi(X)/(cU(X))$. This process is repeated until a sample is accepted, and it is easy to show that the accepted sample is distributed as $X\sim \pi(x)$. 
The requirements to implement a rejection sampler are that we can evaluate $\twid\pi(x)$, and determine $U(x)$ and $c$ which satisfy the above inequality. We will call these samplers \emph{simple rejection samplers} when we need to distinguish these from adaptive rejection samplers, which we introduce later in this section. See \citet{martino2018independent} for an extensive introduction to rejection samplers.

The marginal probability of accepting a sample at any particular iteration from a simple rejection sampler is $p=c^{-1}\int \twid\pi(x) \ dx$, so that the number of iterations $T$ before an accepted sample follows a geometric distribution: $T\sim \mathrm{Geom}(p)$.
 In this paper we assume that the geometric distribution has support $1,2,3,\ldots$, so that this has pmf $P(T=k) = (1-p)^{k-1}(p)$ for $k= 1,2,3,\ldots$.

While rejection samplers allow exact samples to be drawn from an intractable target distribution, the acceptance probability $p$ typically decays exponentially with dimension, making them suitable only for low-dimensional problems.
Adaptive rejection samplers attempt to address this shortcoming, and proceed by producing a {\em sequence} of upper bounds $U_n(x)$ and constants $c_n$ such that $\twid \pi(x)\leq c_n U_n(x)$ and such that the acceptance probability increases with $n$. Just like a simple rejection sampler, conditional on acceptance, adaptive rejection samplers produce samples $X\sim \pi(x)$.  Typically, the upper bounds are updated stochastically, using the information from the previously rejected samples. 
While this 
minimizes the number of evaluations of $\pi$, 
the acceptance probabilities update in a manner depending on the target $\pi$, making the runtime difficult to analyze. Alternatively, the upper bound can be updated in a deterministic manner such as in \citet{leydold2002variants}, which makes understanding the runtime much simpler. While deterministic updates require more evaluations of $\pi$, they can potentially result in upper bounds that converge to $\pi$ much faster resulting in a tradeoff. 

 With adaptive rejection sampling, the marginal probability of accepting a sample at iteration $n$ is $p_n=\frac 1{c_n}\int \pi(x) \ dx$. However, as the acceptance probability changes over time, the runtime $T$  to accept one sample is no longer geometric, but has pmf $P(T=k)=p_t\prod_{i=1}^{k-1} (1-p_i)$, for $k=1,2,3,\ldots$.

 \section{Privacy risk of rejection sampling}
 In this section we characterize the privacy cost of a rejection sampler, {\red when we allow the adversary to have access to both the accepted sample as well as the runtime.} Recall that if a rejection sampler is run until acceptance, then the accepted sample is an exact sample from the target distribution. 
 Thus, the only increased privacy risk from using this algorithm is due to the runtime. 
 We will measure the privacy risk of this side-channel in terms of $\ep$-DP, $(\ep,\de)$-DP, and $f$-DP. {\red We show that for the exponential mechanism, the privacy cost of a rejection sampler's runtime is non-negligible.}
 
 \begin{assumption}
For a rejection sampler, we assume that along with the published accepted sample, the runtime is also available to an attacker. We assume that for all databases $D$ and for all $x$ in the domain, the evaluations $\g_D(x)$ take the same time to evaluate. As such, the runtime is proportional to the number of iterations in the sampler. {\red Thus for the rest of the paper, the runtime will simply refer to the number of iterations in the sampler.}

{\red Note that while the proposal distribution $U_D(x)$, target $\exp(g_D(x))$, and threshold $c_D$ may all depend on $D$, none are directly available to the attacker.}
 \end{assumption}
 
 \begin{rem}
 Many utility functions used in the exponential mechanism can be expressed as empirical risks \citep{bassily2014private,bassily2014privateRevisited,reimherr2019kng,wang2019differentially}.  
 In this case, {\red assuming that the database size $n$ is fixed,} ensuring that the time to evaluate $\g_D(x)$ is constant is equivalent to ensuring that the contributions to the empirical risk from each individual take constant time. This is 
 in line with the techniques used in \citet{haeberlen2011differential} who split each query into sub-queries which are evaluated on each member of the dataset. 
 \end{rem}

First we will study the privacy cost of the rejection sampling runtime in terms of $\ep$-DP. Proposition \ref{prop:RSnoDP} states that the runtime of a rejection sampler violates $\ep$-DP unless the probability of acceptance is constant across databases. To prove this, recall that $\ep$-DP is measured by the max-divergence.  Lemma \ref{lem:geomRatio} shows that the symmetric max-divergence between two geometric random variables is unbounded whenever the parameters differ, and the proposition follows easily from this.

 \begin{lem}\label{lem:geomRatio}
 Let $p,q\in (0,1)$ and let $X \sim \mathrm{Geom}(p)$ and $Y \sim \mathrm{Geom}(q)$. Then
 \[D_{\infty}(X||Y)=
 \begin{cases}
 \log(p/q)&\text{ if } p\geq q\\
 \infty& \text{ if }p<q.
 \end{cases}\]
 {\red Thus, $D_\infty^S(X,Y)=\infty$ whenever $p\neq q$.}
 \end{lem}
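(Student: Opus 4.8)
The plan is to reduce the max-divergence to a supremum of pointwise likelihood ratios over the support and then analyze a single monotone sequence. Since $X$ and $Y$ are both geometric with parameters in $(0,1)$, each places positive mass on every point of the common support $\{1,2,3,\ldots\}$, so $\mu_Y$ dominates $\mu_X$ (and conversely). By the dominated form of the max-divergence recalled in the background section, together with the fact that for discrete measures the set supremum is always attained on atoms, this yields
\[D_\infty(X\|Y) = \sup_{k\geq 1} \log \frac{P(X=k)}{P(Y=k)}.\]
The entire problem thus reduces to understanding the ratio of the two pmfs as a function of $k$.

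First I would write out this ratio explicitly. Using $P(X=k) = (1-p)^{k-1}p$ and $P(Y=k)=(1-q)^{k-1}q$, we get
\[\frac{P(X=k)}{P(Y=k)} = \frac pq \left(\frac{1-p}{1-q}\right)^{k-1}.\]
This is a geometric sequence in $k$ with common ratio $r = (1-p)/(1-q)$, and whether $r\leq 1$ or $r>1$ is controlled precisely by the comparison between $p$ and $q$: one checks $r \leq 1$ if and only if $p \geq q$. The two cases of the lemma then fall out by monotonicity. If $p \geq q$ then $r \leq 1$, so the sequence is nonincreasing in $k$ and attains its supremum at $k=1$, where the ratio equals $p/q$; taking logarithms gives $D_\infty(X\|Y) = \log(p/q)$. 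If $p < q$ then $r > 1$, so the sequence diverges to $+\infty$ as $k \to \infty$, whence $D_\infty(X\|Y) = \infty$.

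For the symmetric statement, observe that $D_\infty(Y\|X)$ is obtained simply by swapping the roles of $p$ and $q$, so it equals $\log(q/p)$ when $q \geq p$ and $\infty$ when $q < p$. Hence whenever $p \neq q$, exactly one of the two one-sided divergences is the ``smaller-parameter-on-top'' case and is therefore infinite, so $D_\infty^S(X,Y) = \max\{D_\infty(X\|Y), D_\infty(Y\|X)\} = \infty$. There is no genuinely hard step here; the only point requiring care is the reduction from the supremum over all measurable sets $B$ to the supremum over atoms $k$, which is justified precisely because both measures have full support (making the Radon--Nikodym derivative everywhere positive and equal to the pmf ratio). Once that reduction is in place, everything follows from the one-line monotonicity argument for a geometric sequence.
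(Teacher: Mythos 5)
Your proof is correct and follows essentially the same route as the paper: both reduce the max-divergence to the supremum of the pointwise pmf ratio (justified by mutual absolute continuity of geometric distributions on the positive integers) and then exploit monotonicity of that ratio in $k$, with the supremum at $k=1$ when $p\geq q$ and divergence as $k\to\infty$ when $p<q$. The only cosmetic difference is that the paper analyzes the log-ratio as a linear function of $k$ while you analyze the ratio as a geometric sequence, which is the same monotonicity argument in different clothing; your explicit treatment of the symmetric claim is a small bonus the paper leaves implicit.
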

 \begin{proof}
 As all geometric random variables, with parameter in $(0,1)$, are equivalent measures on the positive integers, it suffices to determine an upper bound on $\log \frac{P(X=k)}{P(Y=k)}$ for $k \in \{1,2,\ldots\}$. This quantity can be expressed as
 \begin{align*}
     \log \frac{P(X=k)}{P(Y=k)}=\log \frac{(1-p)^{k-1}p}{(1-q)^{k-1}q}&= \log\left( \frac{p(1-q)}{q(1-p)}\right) + k\log\left( \frac{1-p}{1-q}\right).
 \end{align*}
 We see that this quantity is linear in $k$. The slope is non-positive if and only if $p\geq q$, in which case the maximum value is achieved at $k=1$, giving the value $\log(p/q)$. When $p<q$, the slope is positive, and as $k\rightarrow \infty$, the quantity is unbounded.
 \end{proof}
 
 \begin{prop}\label{prop:RSnoDP}
 Let $\{M_D\mid D\in \mscr D\}$ be a privacy mechanism, let $p_D$ be the probability of acceptance for a rejection sampler run on $M_D$, call  $T_D$ the runtime of the rejection sampler {\red which is distributed $\mathrm{Geom}(p_D)$}, and call $X$ the accepted sample. If there exists $D,D'\in \mscr D$ such that $d(D,D')\leq 1$ and $p_D\neq p_{D'}$, then the mechanism that releases $(X,T)$ does not satisfy $\ep$-DP for any $\ep>0$. 
 \end{prop}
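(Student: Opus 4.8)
The plan is to reduce the joint release $(X,T)$ to the release of the runtime $T$ alone, and then invoke Lemma \ref{lem:geomRatio}. The organizing principle is the post-processing property (Proposition \ref{prop:postprocess}): discarding the accepted sample and retaining only the runtime is the deterministic, data-independent coordinate projection $\mathrm{Proc}(x,t)=t$, so by post-processing it cannot improve the privacy guarantee. Hence it suffices to show that the marginal mechanism releasing $T$ alone fails $\ep$-DP, and the contrapositive then forces the joint mechanism to fail as well.

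Concretely, I would argue by contradiction. Suppose the mechanism releasing $(X,T)$ satisfies $\ep$-DP for some finite $\ep>0$. Applying Proposition \ref{prop:postprocess} with $\mathrm{Proc}(x,t)=t$, the mechanism releasing $T$ alone also satisfies $\ep$-DP. By the hypotheses of the proposition, the marginal law of $T$ is $\mathrm{Geom}(p_D)$ under $M_D$ and $\mathrm{Geom}(p_{D'})$ under $M_{D'}$. I would then translate $\ep$-DP for this one-dimensional mechanism into the equivalent bound on the symmetric max-divergence noted after the definition of pure DP, namely that $\ep$-DP holds if and only if $D_\infty^S(M_D,M_{D'})\leq \ep$ for all adjacent $D,D'$. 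In particular this would require $D_\infty^S(\mathrm{Geom}(p_D),\mathrm{Geom}(p_{D'}))\leq \ep<\infty$. But since $d(D,D')\leq 1$ and $p_D\neq p_{D'}$ by assumption, Lemma \ref{lem:geomRatio} gives $D_\infty^S(\mathrm{Geom}(p_D),\mathrm{Geom}(p_{D'}))=\infty$, a contradiction. Since $\ep>0$ was arbitrary, no finite $\ep$ can work.

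I do not expect a genuine obstacle here, as the entire difficulty has already been absorbed into the divergence computation of Lemma \ref{lem:geomRatio}. The only point deserving a sentence of care is that marginalizing out $X$ is a legitimate post-processing and that the marginal of $T$ is exactly the stated geometric law: the former holds because coordinate projection is deterministic and independent of the database, and the latter is part of the proposition's hypotheses. It is worth emphasizing that this route uses neither the joint structure of $(X,T)$ nor any independence between the accepted sample and the runtime; the side-channel information carried by $T$ alone is already enough to break $\ep$-DP, which is exactly the phenomenon the proposition is meant to highlight.
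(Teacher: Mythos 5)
Your proposal is correct and follows essentially the same route as the paper's own proof: apply the post-processing property (Proposition \ref{prop:postprocess}) to reduce to the runtime $T$ alone, then invoke Lemma \ref{lem:geomRatio} to conclude that the symmetric max-divergence between $\mathrm{Geom}(p_D)$ and $\mathrm{Geom}(p_{D'})$ is infinite whenever $p_D\neq p_{D'}$. Your write-up is merely a more detailed (contradiction-style) rendering of the paper's two-sentence argument, so there is nothing to add.
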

 \begin{proof}
 By post-processing (Proposition \ref{prop:postprocess}), we get a lower bound on the privacy cost by only considering $T$. If there exists $D$ and $D'$ such that $d(D,D')\leq 1$ and $p_D\leq p_{D'}$, then by Lemma \ref{lem:geomRatio} the symmetric max-divergence is unbounded, and the result follows. 
 \end{proof}

Theorem \ref{thm:privacyCost} gives a more precise characterization of the privacy loss due to rejection sampling as measured by $f$-DP and $(\ep,\de)$-DP. 
For the former, we bound the tradeoff function of the geometric variables with the tradeoff function for exponential variables, which allows for a simpler formula. This bound is tighter for small acceptance probabilities. We use the likelihood ratio property of the exponential distribution along with some properties of convex functions to get the formula in equation~\eqref{eq:exp_bnd}. We then use Proposition \ref{prop:postprocess} to convert the $f$-DP guarantee to $(\ep,\de)$-DP guarantees.
 
 \begin{figure}
    \centering
    \begin{subfigure}[t]{.48\textwidth}
    \includegraphics[width=\linewidth]{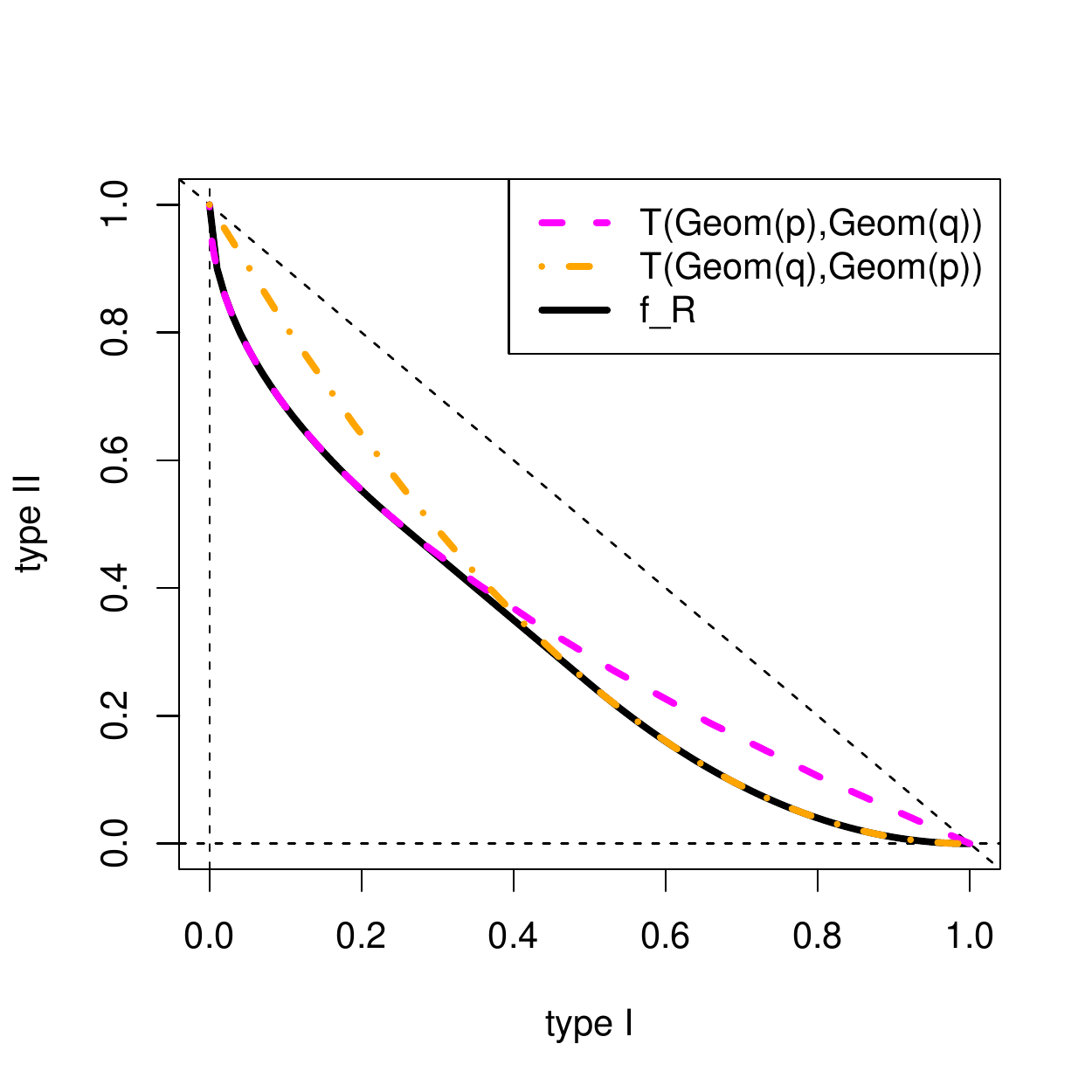}
    \end{subfigure}
		\begin{subfigure}[t]{0.48\textwidth}
    \includegraphics[width=\linewidth]{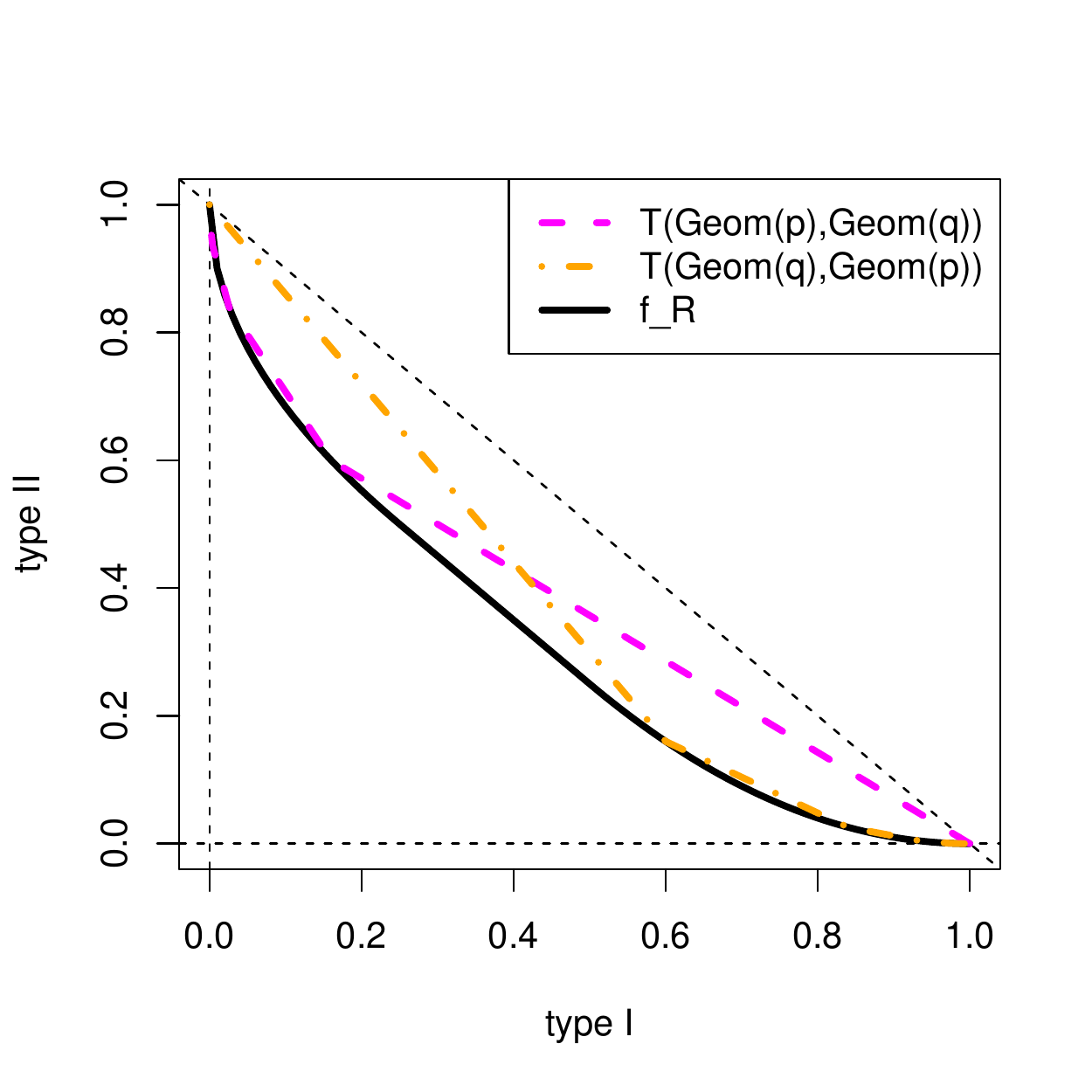}
\end{subfigure}
    \caption{The tradeoff functions of $T(\mathrm{Geom}(p),\mathrm{Geom}(q))$ and $T(\mathrm{Geom}(q),\mathrm{Geom}(p))$, along with $f_R$ from Theorem \ref{thm:privacyCost}. We fix $R=2$. In the left plot $q=.1$ and in the right $q=.6$. $p=1-(1-q)^R>q$. We see that the approximation $f_R$ is more accurate for smaller $q$.}
    \label{fig:fR}
\end{figure}

 \begin{thm}\label{thm:privacyCost}
 Let $(\mscr D,d)$ be a metric space of databases, and let $T_D$ be the runtime of a rejection sampler for database $D$ which has acceptance probability $p_D$. Note that $T_D\sim \mathrm{Geom}(p_D)$. Call $R = \sup_{d(D,D')\leq 1} \frac{\log(1-p_D)}{\log(1-p_{D'})}$. The mechanism that releases the runtime $T_D$ 
 \begin{enumerate}
     \item satisfies $f_R$-DP, where 
     \begin{align}f_R(\alpha)= \begin{cases}
     1-\alpha^{1/R}& \alpha \leq R^{R/(1-R)}\\
     -\alpha + R^{R/(1-R)} + 1-R^{1/(1-R)}&R^{R/(1-R)}< \alpha< 1- R^{1/(1-R)}\\
     (1-\alpha)^R&\alpha\geq1-R^{1/(1-R)},
     \end{cases}\label{eq:exp_bnd}\end{align}
     \item satisfies $(\ep,\de(\ep))$-DP for all $\ep\geq 0$, where $\de(\ep)=(1-1/R)\exp\left(\frac{-\ep-\log(R)}{R-1}\right)$,
     \item satisfies $(\ep(\de),\de)$-DP for all $0<\de\leq (R-1)R^{R/(1-R)}$, where $\ep(\de) = \log(1/R) + (R-1)(\log(1/\de)+\log(1-1/R))$.
 \end{enumerate}
 \end{thm}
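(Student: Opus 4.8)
The crux is part (1); parts (2) and (3) then follow almost mechanically from Proposition~\ref{prop:ftoDP}. For part (1) the plan is to reduce the geometric tradeoff function to an exponential one, where the monotone likelihood ratio gives closed forms, and then to take a convex minorant of the two resulting curves.

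First I would observe that $\mathrm{Geom}(p)$ is a deterministic post-processing of an exponential: if $E\sim\mathrm{Exp}(\lambda)$ with $\lambda = -\log(1-p)$, then $\lceil E\rceil \sim \mathrm{Geom}(p)$, since $P(\lceil E\rceil = k) = e^{-\lambda(k-1)} - e^{-\lambda k} = (1-p)^{k-1}p$. Writing $\lambda_D = -\log(1-p_D)$, the single map $t\mapsto\lceil t\rceil$ sends $\mathrm{Exp}(\lambda_D)$ and $\mathrm{Exp}(\lambda_{D'})$ simultaneously to $\mathrm{Geom}(p_D)$ and $\mathrm{Geom}(p_{D'})$. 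The data-processing property behind Proposition~\ref{prop:postprocess} then gives $T(\mathrm{Geom}(p_D),\mathrm{Geom}(p_{D'})) \geq T(\mathrm{Exp}(\lambda_D),\mathrm{Exp}(\lambda_{D'}))$, so it suffices to lower bound the exponential tradeoff function. Since $R = \sup_{d(D,D')\leq 1}\lambda_D/\lambda_{D'}$, it is enough to handle the worst-case ratio $R$.

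Next I would compute $T(\mathrm{Exp}(\lambda_1),\mathrm{Exp}(\lambda_2))$ using the monotone likelihood ratio of the exponential family: the Neyman--Pearson optimal tests threshold $t$, and a direct integration gives $f_1(\alpha) = 1 - \alpha^{1/R}$ when $\lambda_1 > \lambda_2$ (rejecting for large $t$) and $f_2(\alpha) = (1-\alpha)^R$ in the reversed direction (rejecting for small $t$), where $R = \lambda_1/\lambda_2$. Because $f$-DP must bound both orderings, the target lower bound must lie below both $f_1$ and $f_2$. These two functions are inverses of one another, hence each other's reflection across $y=x$, so their pointwise minimum is symmetric but has a nonconvex kink at the diagonal. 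The largest convex tradeoff function beneath both is their common tangent; by the reflection symmetry this tangent has slope $-1$, and solving $f_1'(\alpha) = -1$ and $f_2'(\alpha) = -1$ identifies the contact points $\alpha = R^{R/(1-R)}$ and $\alpha = 1 - R^{1/(1-R)}$. Splicing the two curves with this tangent segment yields exactly \eqref{eq:exp_bnd}; I would finish by checking continuity and slope-matching at the two junctions to confirm that $f_R$ is convex and therefore a valid tradeoff function. For parts (2) and (3) I would invoke Proposition~\ref{prop:ftoDP}: since $f_R$ is symmetric and convex, the tightest $(\ep,\de)$ guarantee comes from the supporting line of slope $-\exp(\ep)$, whose intercept is $1-\de$. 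For $\ep\geq 0$ the slope $-\exp(\ep)\leq -1$ meets $f_R$ on the first piece $1-\alpha^{1/R}$; solving $-\tfrac{1}{R}\alpha^{(1-R)/R} = -\exp(\ep)$ for the contact point $\alpha_0$ and reading off the intercept $1-\de = f_R(\alpha_0) + \exp(\ep)\alpha_0$ gives, after simplification, $\de(\ep) = (1-1/R)\exp\!\big(\tfrac{-\ep-\log R}{R-1}\big)$, which is part (2). Part (3) is the algebraic inversion of this identity, solving for $\ep$ in terms of $\de$, and the stated range $0<\de\leq (R-1)R^{R/(1-R)}$ is exactly the image of $\ep\geq 0$ since $\de(0) = (R-1)R^{R/(1-R)}$.

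The main obstacle I anticipate is part (1): getting the direction of the post-processing inequality right (discretization can only make the two laws harder to distinguish, so the tradeoff function increases) and then justifying the convex-analysis step that the common tangent of the two inverse curves is indeed the correct convex minorant lying below both. The remaining work, including the exponential integrals and the tangent-line algebra in parts (2)--(3), should be routine calculation.
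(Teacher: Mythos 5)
Your proposal is correct and follows essentially the same route as the paper's proof: reduce the geometric laws to exponentials via discretization-as-post-processing, obtain $1-\alpha^{1/R}$ and $(1-\alpha)^R$ from the Neyman--Pearson lemma, form the convex hull of the two inverse curves by splicing in their common slope $-1$ tangent, and derive parts (2)--(3) from the supporting lines of slope $-\exp(\ep)$ on the first piece. The only cosmetic difference is your use of $\lceil E \rceil$ in place of the paper's $\lfloor X \rfloor + 1$, which is immaterial.
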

 \begin{proof}
 We begin by establishing the form of $f_R$, and then use Proposition \ref{prop:ftoDP} to produce $(\ep,\de)$-DP guarantees. We first show that by bounding the tradeoff function of the exponential distribution, we get bounds for geometric variables as well.
 Call $\lambda_D=-\log(1-p_D)$. Recall that if $X_D\sim \mathrm{Exp}(\lambda_D)$, then $\lfloor X_D\rfloor+1\sim \mathrm{Geom}(p_D)$. By Proposition \ref{prop:postprocess}, we have that $T(\mathrm{Exp}(\lambda_{D}),\mathrm{Exp}(\lambda_{D'}))=T(X_D,X_{D'})\leq T(\lfloor X_D\rfloor+1,\lfloor X_{D'}\rfloor +1)=T(\mathrm{Geom}(p_D),\mathrm{Geom}(p_{D'}))$, where $T(\cdot,\cdot)$ represents the tradeoff function.

 Next, we will derive the tradeoff function $T(\mathrm{Exp}(\lambda_{D}),\mathrm{Exp}(\lambda_{D'}))$ assuming that $\lambda_D>\lambda_{D'}$. 
 Let $p_{\lambda_D}(x)$ be the pdf of $\mathrm{Exp}(\lambda_D)$. Note that $(p_{\lambda_{D'}}(x)/p_{\lambda_D}(x)) = (\lambda_{D'}/\lambda_D)\exp(x(\lambda_D-\lambda_{D'}))$ is an increasing function of $x$. By the Neyman Pearson Lemma, the most powerful test has a rejection region of the form $x\geq T$. The type I error is $\alpha = \exp(-\lambda_D T)$ and type II is $\beta = 1-\exp(-\lambda_{D'}T)$. Expressing $\beta$ as a function of $\alpha$ gives $\beta = 1-\alpha^{\lambda_{D'}/\lambda_D}\geq 1-\alpha^{1/R}$. Thus, we have that $T(X_D,X_{D'})\geq 1-\alpha^{1/R}$. We also need a lower bound on $T(X_{D'},X_{D})$. Note that $T(X_{D'},X_D)$ is the inverse of $T(X_D,X_{D'})$. By taking the inverse of $1-\alpha^{1/R}$, we have $T(X_{D'},X_{D})\geq (1-\alpha)^R$. 
 
 To get a single bound on both $T(X_D,X_{D'})$ and $T(X_{D'},X_{D})$, we take the convex hull of $\min\{1-\alpha^{1/R},(1-\alpha)^R\}$, which we claim has the form $f_R(\alpha)$ as stated in 1. To this end, we first verify that 
 \begin{equation}\label{eq:tradeExp}
 1-\alpha^{1/R}\leq 1-R\alpha\leq (1-\alpha)^R,
 \end{equation}
 for all $0\leq \alpha\leq R^{R/(1-R)}${\red, so that over this range of $\alpha$, the convex hull just equals $1-\alpha^{1/R}$ as required by the first line of Equation~\eqref{eq:exp_bnd}.  To establish the first inequality of Equation \eqref{eq:tradeExp}, note that $f(\alpha)=1-\alpha^{1/R}$ is a convex function; this can be seen either by diifferentiating it twice, or from the fact that it is a tradeoff function. The straight line $1-R\alpha$ intersects this curve at $\alpha=0$ and $\alpha = R^{R/1-R}$, and for intermediate values of $\alpha$, forms a chord segment.
 From convexity, this chord lies above the curve.} 
 For the second inequality of Equation \eqref{eq:tradeExp}, observe that $(1-\alpha)^R$ is also convex. 
 We can easily verify that the line $1-R\alpha$ is the tangent at $\alpha=0$. The second inequality then follows from the fact that a convex function is lower bounded by its tangent.
 This justifies the first line of $f_R(\alpha)$ in Equation~\eqref{eq:exp_bnd}.
 By symmetry, we also have that the third line is correct. 
 
 {\red For the middle inequality, we note that the curves $1-\alpha^{1/R}$ and $(1-\alpha)^{R}$ have slope $-1$ at the points $R^{R/(1-R)}$ and  $1-R^{1/(1-R)}$ respectively.
 It is easily verified that the straight line $g(\alpha) = -\alpha + R^{R/(1-R)} + 1-R^{1/(1-R)}$ intersects the two curves at these two points, and has slope $-1$.
 It is thus tangent to both curves, and from convexity, lies below both of them. Altogether, we conclude that  $f_R(\alpha)$ is the appropriate convex hull. }

 To get the formulas in 2. and 3., recall that the mechanism satisfies $(\ep,\de)$-DP if the line {\red$(1-\de)-\exp(\ep) \alpha$} is a lower bound for the tradeoff function $f_R(\alpha)$. To get the tightest $(\ep,\de)$-DP guarantees, we characterize the supporting linear functions. By symmetry, it suffices to determine the tangent lines of $1-\alpha^{1/R}$ for values $0\leq \alpha\leq R^{R/(1-R)}$. We calculate the derivative as $\frac{d}{d \alpha} (1-\alpha^{1/R})=\frac{-1}{R} \alpha^{1/R-1}$. Set $-\exp(\ep) = \frac{-1}{R} \alpha^{1/R-1}$, which has the solution $\ep = \log(1/R) + (1/R-1)\log \alpha$. 
 
 The line with slope $-\exp(\ep)=\frac{-1}{R} \alpha^{1/R-1}$ that passes through $(\alpha,1-\alpha^{1/R})$ is $y-(1-\alpha^{1/R})=\frac{-1}{R} \alpha^{1/R-1}(x-\alpha)$, which has $y$-intercept $1-\de=1-\alpha^{1/R}(1-1/R)$, giving $\de = \alpha^{1/R}(1-1/R)$.  Eliminating $\alpha$ from the equations $\ep = \log(1/R) + (1/R-1)\log \alpha$ and $\de = \alpha^{1/R}(1-1/R)$ gives the expressions in parts 2. and 3. in the theorem statement.  Note that $\de(0)=(R-1)R^{R/(1-R)}$, so for any $\de>\de(0)$, the mechanism satisfies $(0,\de)$-DP, but this is a strictly weaker guarantee than $(0,\de(0))$-DP. 
 \end{proof}
 
  The approximation in Theorem \ref{thm:privacyCost} improves for smaller probabilities of acceptance, as seen in Figure \ref{fig:fR}. Intuitively, this is because the approximation of a geometric variable as an exponential is more accurate for smaller probabilities of acceptance. As rejection samplers typically have small rejection probabilities, the privacy guarantees of Theorem \ref{thm:privacyCost} are quite accurate for rejection samplers of interest. {\red  In Table \ref{tab:Rconvert}, we give a few examples converting the quanitity $R$ to $(\ep,\de)$-DP guarantees. We see that even with a small $R$ of $1.1$, there is a nontrivial privacy cost. In Example \ref{ex:expMech}, we explore what values of $R$ we may expect in practice.}

 \begin{cor}\label{cor:privacyCost}
 Let $(\mscr D,d)$ be a metric space of databases, $M_D$ a privacy mechansim which satisfies $f$-DP and $T_D$ the runtime of a rejection sampler for $M_D$ which has acceptance probability $p_D$. Call $R = \sup_{d(D,D')\leq 1} \frac{\log (1-p_D)}{\log(1-p_{D'})}$. Then the privacy cost of $M_D$ along with the runtime is $f_R\otimes f$, where $f_R$ is defined in Theorem \ref{thm:privacyCost} and $\otimes$ is the \emph{tensor product} of two tradeoff functions \citep[Definition 5]{dong2022gaussian}.
 \end{cor}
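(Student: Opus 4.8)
The plan is to exploit the product structure of the joint output $(X,T_D)$. The governing observation is that, in a rejection sampler run until the first acceptance, the accepted sample $X$ and the runtime $T_D$ are \emph{independent}; consequently the joint law of $(X,T_D)$ under database $D$ is the product measure $\mu_{X,D}\times \mu_{T,D}$, where $\mu_{X,D}=M_D$ is the (exact) target and $\mu_{T,D}=\mathrm{Geom}(p_D)$. Once independence is in hand, I would invoke the tensor-product identity for tradeoff functions of product distributions from \citet{dong2022gaussian}, namely $T(\mu_{X,D}\times \mu_{T,D},\,\mu_{X,D'}\times \mu_{T,D'}) = T(\mu_{X,D},\mu_{X,D'})\otimes T(\mu_{T,D},\mu_{T,D'})$, together with the monotonicity of $\otimes$ in each argument.

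First I would establish independence explicitly. Writing $Y_1,Y_2,\ldots \iid U_D$ for the proposals and letting $T_D$ be the index of the first acceptance, for any measurable $A$ I would compute
\begin{align*}
P(T_D=k,\ X\in A) &= (1-p_D)^{k-1}\, P(Y_k\in A,\ Y_k \text{ accepted})\\
&= (1-p_D)^{k-1}\,\frac{1}{c_D}\int_A \twid\pi_D(x)\,dx\\
&= (1-p_D)^{k-1} p_D\cdot \pi_D(A),
\end{align*}
using $p_D = c_D^{-1}\int \twid\pi_D$. This factors as $P(T_D=k)\,\pi_D(A)$, giving independence and simultaneously re-deriving that the accepted sample has law $\pi_D$ irrespective of $T_D$.

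With independence secured, the argument closes quickly. By assumption $M_D$ is $f$-DP, so $T(\mu_{X,D},\mu_{X,D'})\geq f$ for every adjacent pair; by Theorem \ref{thm:privacyCost} the runtime is $f_R$-DP, so $T(\mu_{T,D},\mu_{T,D'})\geq f_R$, where $R$ is the stated supremum and hence $f_R$ is a \emph{uniform} lower bound valid for all adjacent $D,D'$. Applying the product identity and then monotonicity of the tensor product yields $T(\mu_{X,D}\times\mu_{T,D},\,\mu_{X,D'}\times\mu_{T,D'}) = T(\mu_{X,D},\mu_{X,D'})\otimes T(\mu_{T,D},\mu_{T,D'}) \geq f\otimes f_R = f_R\otimes f$, the last equality by symmetry of $\otimes$. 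Thus the joint mechanism is $(f_R\otimes f)$-DP. Equivalently, one may bypass monotonicity entirely and read the conclusion straight off the $f$-DP composition theorem of \citet{dong2022gaussian}, applied to the two independent releases.

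I expect the main obstacle to be the independence step and its bookkeeping: I must be careful that the acceptance event at iteration $k$ depends only on $Y_k$, so that the geometric waiting time genuinely decouples from the accepted draw, and I should confirm that taking $R$ as a supremum over adjacent pairs produces a single tradeoff function $f_R$ dominating all of the pairwise runtime tradeoff functions at once. A secondary point worth verifying is the monotonicity $f_1\geq g_1,\ f_2\geq g_2 \Rightarrow f_1\otimes f_2\geq g_1\otimes g_2$, which follows from the definition of $\otimes$ via products of distributions.
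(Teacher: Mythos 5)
Your proposal is correct and takes essentially the route the paper intends: the paper states this corollary without proof, treating it as an immediate consequence of Theorem~\ref{thm:privacyCost} together with the tensor-product machinery of \citet{dong2022gaussian} (the definition $f\otimes g = T(P\times P', Q\times Q')$, its well-definedness, and monotonicity), which is exactly what you carry out. Your explicit verification that the accepted sample and the runtime are independent---so that the joint law under $D$ is the product $M_D\times\mathrm{Geom}(p_D)$ and the joint tradeoff function factors as a tensor product---is the one step the paper leaves implicit, and you execute it correctly.
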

 
    \begin{table}
  {\red
     \[\begin{array}{c|cccccc}
         \delta=& .1&.01&.001&10^{-4}&10^{-5}&10^{-6}  \\\hline
          R=2&.916&3.22&5.52&7.82&10.13&12.43\\
          R=1.1&0&.125&.356&.59&.82&1.05
     \end{array}\]}
     \caption{{\red The $(\ep(\de),\de)$-DP guarantee for a simple rejection sampler, where $R$ (defined in Theorem \ref{thm:privacyCost}) is either $2$ or $1.1$. The values $\ep(\de)$ appear in the table for each combination of $\delta$ and $R$.}}
     \label{tab:Rconvert}
 \end{table}

 In Corollary \ref{cor:privacyCost}, the tensor product $f\otimes g$, where $f = T(P,Q)$ and $g=T(P',Q')$ is defined as $f\otimes g=T(P\times P',Q\times Q')$, where $P\times P'$ is the product distribution  \citep[Definition 5]{dong2022gaussian}. In general, it is challenging to derive a closed form of $f\otimes g$. 
 
 \begin{rem}[Rejection Sampling Trivial for Location-Scale]
 For some distributions, it is easy to build a rejection sampler, with constant acceptance probability. For example, suppose that the mechanism $\{M_D\mid D\in \mscr D\}$ is location-scale (e.g., $K$-norm mechanisms: \citealp{hardt2010geometry,awan2020structure}).  In this case, we build a rejection sampler for a default distribution in the family, and transform after sampling. Then we have a rejection sampler where the acceptance rate is independent of the dataset. 
 \end{rem}

 {\red
 While Theorem \ref{thm:privacyCost} describes the privacy cost of a rejection sampler's runtime, it is phrased in terms of the quantity $R$, which may be unintuitive. In the following example, we show that for a generic exponential mechanism, with an arbitrary set of proposal distributions, $R$ is lower bounded by  $\exp(\ep)$, and may even be infinite.
 
 \begin{ex}[Exponential Mechanism]\label{ex:expMech}
 Recall that the exponential mechanism results in a target distribution of the form $\twid \pi_D = \exp(g_D(x))$, which usually satisfies $\exp(-\ep/2)\leq \frac{\twid \pi_{D'}(x)}{\twid \pi_D(x)}\leq \exp(\ep/2)$ for adjacent databases $D$ and $D'$ (the integrating constants may also differ by a factor of at most $\exp(\pm\ep)$) \citep{mcsherry2007mechanism}. Let $\mscr U$ be a family of proposal distributions, and for each database $D$, let $c_D$ and $U_D$ be the optimal proposal distribution from $\mscr U$ such that $\twid \pi_D\leq c_D U_D(x)$, where by optimal, we mean that the acceptance probability $p_D = \frac{\int \twid \pi_D(x) \ dx}{c_D}$ is maximized; or equivalently $c_D$ is minimized. Then, from the following inequality,
 \[\twid \pi_{D'}(x)\leq \exp(\ep/2) \twid \pi_D(x)\leq \exp(\ep/2) c_D U_D(x),\]
 we see that $\exp(\ep/2)c_D$ and $U_D$ offer a (potentially inferior) proposal distribution for $\twid \pi_{D'}$. Using this relationship between the proposal distributions of $D$ and $D'$, we can give a bound for the acceptance probability for $D'$ based on the acceptance probability for $D$:
 \[p_{D'} = \frac{\int \twid \pi_{D'}(x) \ dx}{c_{D'}}\geq \frac{\int \twid \pi_{D'} \ dx}{\exp(\ep/2) c_D}\geq \frac{\exp(-\ep/2) \int \twid \pi_{D} \ dx}{\exp(\ep/2) c_D} = \exp(-\ep) p_D.\]
 Call $p^*$ the highest acceptance probability over all possible databases $D$. Then the quantity $R$ that appears in Theorem \ref{thm:privacyCost} can be expressed as 
 \begin{equation}\label{eq:expMechR}
     R=\frac{\log (1-p^*)}{\log(1-\exp(-\ep) p^*)}.
     \end{equation}
     Note that as $p^*\rightarrow 1$ in Equation \eqref{eq:expMechR}, $R$ diverges to infinity. We can also get a lower bound on $R$:
     \begin{equation}
     R=\frac{\log (1-p^*)}{\log(1-\exp(-\ep) p^*)}
     \geq \lim_{p\rightarrow 0} \frac{\log(1-p)}{\log(1-\exp(-\ep)p)}
     \overset{L'H}{=} \lim_{p\rightarrow 0} \frac{1-\exp(-\ep)p}{\exp(-\ep)(1-p)}
     =\exp(\ep),
 \end{equation}
 where $\overset {L'H}{=}$ indicates the use of L'H\^opital's rule, and we used the fact that $\log(1-p)/\log(1-\exp(-\ep)p)$ is increasing in $p$ for all $p\in (0,1)$ and $\ep>0$; to see this, we compute the derivative with respect to $p$:
 \begin{equation}
     \frac{(1-p)\log(1-p)-(\exp(\ep)-p)\log(1-p \exp(-\ep))}{(\exp(\ep-p)(1-p)(\log(1-p\exp(-\ep)))^2}.\label{eq:deriv}
 \end{equation}
 We see in \eqref{eq:deriv} that the denominator is positive so long as $0<p<1$. The numerator of \eqref{eq:deriv} can be expressed as 
 \[\sum_{n=2}^\infty p^n\left (\frac{1}{n(n-1)}\right)\left(1-\exp(-\ep(n-1))\right),\]
 which we can see is positive and finite for all $\ep>0$ and $p\in (0,1)$.

 \end{ex}
 }

 {\red \begin{rem}[Parallelization/Batching]
 Suppose that we have a simple rejection sampler targeting $\twid \pi_D$ with acceptance probability $p_D$. We could consider a parallelized implementation as follows: run the sampler on $k$ nodes; when the first sample is accepted, return the sample and the runtime and abort the other instances of the sampler. In this scheme, the runtime is distributed as $\min\{G_1,\ldots, G_k\}\sim \mathrm{Geom}(1-(1-p_D)^k)$, where $G_i$ are independent $\mathrm{Geom}(p_D)$ random variables. Now, suppose for two adjacent databases $D$ and $D'$ that 
 \[\frac{\log(1-p_D)}{\log(1-p_{D'})}=R,\]
 which is the quantity in Theorem \ref{thm:privacyCost} that governs the privacy cost of the runtime.  Then, in the parallelized scheme, we have 
 \[\frac{\log (1-[1-(1-p_D)^k])}{\log(1-[1-(1-p_{D'})^k])} 
 =\frac{\log((1-p_D)^k)}{\log((1-p_{D'})^k)}
 =\frac{k\log(1-p_D)}{k\log(1-p_{D'})}
 =R.\]
 We see that parallelization does not affect the privacy cost of the runtime.
 
 Similarly, one could decide to run the rejection sampler for a fixed number of iterations, say $m$ before checking if one of the samples is accepted, and then repeating if necessary. This may be useful in combination with parallelization, since communication between the nodes could be a bottleneck. With batching, the runtime until a sample is accepted is $\mathrm{Geom}(1-(1-p_D)^m)$, which is the same runtime as in the parallelization. By the same reasoning, batching also does not affect the privacy cost of the runtime.
 
 In Section \ref{s:results}, we develop samplers with data-independent runtime. As such, parallelizing or batching the samplers in the manner described above maintains the property that the runtime is data-independent, while potentially giving a significant speed up.
 \end{rem}
 }

 \subsection{Privacy risk of adaptive rejection sampling}
 In this section, we analyze the privacy risk of adaptive rejection samplers. Often adaptive rejection samplers update the proposal in a stochastic manner, based on the target value at previously rejected samples. In this section, we consider the setting where the proposal is updated in a deterministic manner, such as in \citet{leydold2002variants}. 
 We show that unless the acceptance probabilities converge in a strong sense, an adaptive rejection sampler will not satisfy $\ep$-DP for any finite $\ep$. 

 \begin{prop}\label{prop:adaptiveCost}
 Let $\mscr D$ be the space of databases and $\{M_D\mid D\in \mscr D\}$ a privacy mechanism which satisfies $\ep$-DP. Let $(p_i^D)_{i=1}^\infty$ be the sequence of acceptance probabilities for an adaptive rejection sampler for $M_D$. Call $T_D$ the runtime of the adaptive sampler for $M_D$, which has pmf $P(T_D=t)=p_t^D\prod_{i=1}^{\red t-1}(1-p_i^D)$. Then releasing a sample from $M_D$ as well as the runtime $T_D$ satisfies $(\ep+\ep_T)$-DP, where 
 \[\ep_T\geq \log(p_t^D/p_{t}^{D'}) + \sum_{i=1}^{t-1}\log\left(\frac{1-p_i^D}{1-p_i^{D'}}\right),
 \]
 for all $t\geq 1$ and all $d(D,D')\leq 1$. {\red If there exists a constant $c$ such that $p^D_1\geq c>0$} for all $D$, then the value $\ep_T$ is finite if and only if the sequence {\red $\left(\sum_{i=1}^t \log\frac{1-p_i^D}{1-p_i^{D'}}\right)_{t=1}^\infty$} is universally bounded for all $d(D,D')\leq 1$. 
 \end{prop}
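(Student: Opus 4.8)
The plan is to decompose the release of the pair $(X,T_D)$ into its two coordinates and handle each with tools already available: the sample $X$ is controlled by the assumed $\ep$-DP of $M_D$, while the runtime $T_D$ contributes an additive cost $\ep_T$ measured by the symmetric max-divergence of the runtime laws. The structural fact I would establish first is that the accepted sample and the runtime are \emph{independent}. This is the standard rejection-sampling property: conditional on acceptance occurring at any particular iteration $t$ (and rejection at all earlier iterations), the accepted draw has the target law $\pi_D$, and since this conditional law does not depend on $t$, we get $X\indep T_D$ and the joint law factors as $\pi_D\otimes P_{T_D}$. For a product measure the Radon--Nikodym derivative factorizes, so the max-divergence is additive:
\[D_\infty(\pi_D\otimes P_{T_D}\,\|\,\pi_{D'}\otimes P_{T_{D'}}) = D_\infty(\pi_D\|\pi_{D'}) + D_\infty(P_{T_D}\|P_{T_{D'}}),\]
and the same holds in the reversed direction, so the symmetric max-divergence is subadditive over the product. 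Since $\ep$-DP of $M_D$ is exactly the statement $D_\infty^S(\pi_D,\pi_{D'})\le\ep$, releasing $(X,T_D)$ satisfies $(\ep+\ep_T)$-DP with $\ep_T=\sup_{d(D,D')\le1}D_\infty^S(T_D,T_{D'})$.

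Next I would compute $\ep_T$ explicitly. Because every $p_i^D\in(0,1)$, the laws $P_{T_D}$ and $P_{T_{D'}}$ are mutually absolutely continuous measures on $\{1,2,3,\dots\}$, so $D_\infty(T_D\|T_{D'})=\sup_{t\ge1}\log\frac{P(T_D=t)}{P(T_{D'}=t)}$. Substituting the pmf $P(T_D=t)=p_t^D\prod_{i=1}^{t-1}(1-p_i^D)$ and taking logarithms gives, for each fixed $t$,
\[\log\frac{P(T_D=t)}{P(T_{D'}=t)} = \log\frac{p_t^D}{p_t^{D'}} + \sum_{i=1}^{t-1}\log\frac{1-p_i^D}{1-p_i^{D'}}.\]
Taking the supremum over $t$ and over all adjacent pairs (which, being a symmetric relation, also covers the reversed direction and hence the full symmetric divergence) shows that $\ep_T$ must dominate the right-hand side for every $t\ge1$ and every $d(D,D')\le1$, which is the stated lower bound.

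For the finiteness characterization, write $S_t^{D,D'}=\sum_{i=1}^{t}\log\frac{1-p_i^D}{1-p_i^{D'}}$, so that $\ep_T=\sup_{D,D',t}\big(\log\tfrac{p_t^D}{p_t^{D'}}+S_{t-1}^{D,D'}\big)$. The crux is to control the leading term $\log(p_t^D/p_t^{D'})$. Because the acceptance probabilities of an adaptive rejection sampler are nondecreasing in $t$, the hypothesis $p_1^D\ge c>0$ forces $p_t^D\ge c$ for all $t$ and all $D$; combined with $p_t^{D'}\le1$ this yields $\log c\le\log\frac{p_t^D}{p_t^{D'}}\le\log(1/c)$, a bound uniform in $t$ and in the pair. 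Consequently $\ep_T$ is finite if and only if $\sup_{D,D',t}S_{t-1}^{D,D'}<\infty$; and since interchanging $D$ and $D'$ negates $S_t^{D,D'}$, boundedness above for all adjacent pairs is equivalent to two-sided (universal) boundedness of the family $\big(S_t^{D,D'}\big)_t$, giving the claimed equivalence. The main obstacle is the first step: one must justify the independence of $X$ and $T_D$ so that the privacy cost genuinely splits as $\ep+\ep_T$ rather than requiring a joint analysis. Everything after that is a direct max-divergence computation, with the monotonicity of $p_i^D$ doing the real work in the finiteness claim.
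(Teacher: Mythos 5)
Your proof is correct and follows essentially the same route as the paper's: the same pmf-ratio computation giving $\log(p_t^D/p_t^{D'})+\sum_{i=1}^{t-1}\log\frac{1-p_i^D}{1-p_i^{D'}}$, and the same use of $p_t^D/p_t^{D'}\in[c,1/c]$ (via monotonicity of adaptive acceptance probabilities) to reduce finiteness of $\ep_T$ to universal boundedness of the partial sums. The one place you go beyond the paper is the opening step: the paper simply asserts the additive $(\ep+\ep_T)$ decomposition, whereas you justify it explicitly through the independence of the accepted sample and the runtime and the resulting additivity of max-divergence over product measures --- a worthwhile piece of rigor, since that independence is precisely what makes the split valid rather than requiring a joint analysis.
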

 \begin{proof}
 For readability, we set $p_i\defeq p_i^D$ and $q_i\defeq p_i^{D'}$.  We require that $\log \frac{P(T_D=t)}{P(T_{D'}=t)}\leq \ep_T$ for all $d(D,D')\leq 1$. A little algebra gives the expression for $\ep_T$. 
 
Next, $\ep_T$ is finite if and only if $\log \frac{P(T_D=t)}{P(T_{D'}=t)}$ is bounded above and below for all $d(D,D')\leq 1$. Equivalently, this requires $\log \frac{p_t\prod_{i=1}^{t-1}(1-p_i)}{q_t\prod_{i=1}^{t-1}(1-q_i)}=
 \log \left(\frac{p_t}{q_t}\prod_{i=1}^{t-1} \left(\frac{1-p_i}{1-q_i}\right)\right)$ be universally bounded above and below for all $t$. Since  $\frac{p_t}{q_t}$ is bounded below by {\red $c$} and above by {\red $1/c$}, the previous quantity is bounded if and only if $\log \left(\prod_{i=1}^{t-1} \left(\frac{1-p_i}{1-q_i}\right)\right)=\sum_{i=1}^{t-1}\log\left(\frac{1-p_i}{1-q_i}\right)$ is bounded for all $t$. Relabelling $t-1$ to $t$ gives the final result. 
 \end{proof}
 
 Proposition \ref{prop:adaptiveCost} shows that unless the acceptance probabilities are very closely related, it is not guaranteed that an adaptive rejection sampler will satisfy $\ep$-DP for any finite $\ep$. In the following example, we explore a few special cases to highlight when we can or cannot expect the condition in Proposition \ref{prop:adaptiveCost} to hold. 
 \begin{ex}
 
  \begin{itemize}
     \item If there exists $i$ such that $p_i=1$ whereas $q_i<1$ or vice versa, then $\ep_T=\infty$.
     \item Suppose that $(1-q_i)=\alpha(1-p_i)$ where $\alpha\in (0,1)$. Then the above series is $\sum_{i=1}^t \log \frac{1-p_i}{1-q_i}=\sum_{i=1}^t \log \alpha\rightarrow \infty$. 
     \item To see that it is not sufficient for $\lim_{i\rightarrow \infty} \frac{1-p_i}{1-q_i}=1$, consider the following: let $(1-p_i)$ be any decreasing sequence with values in $(0,1)$. Set $(1-q_i) = \exp(-1/i) (1-p_i)$. Then $\log\left(\frac{1-p_i}{1-q_i}\right)= 1/i$ and so $\frac{1-p_i}{1-q_i}\rightarrow 1$. However, the sequence of partial sums $\sum_{i=1}^t  \log\left(\frac{1-p_i}{1-q_i}\right) = \sum_{i=1}^t 1/i$ diverges, and so the max-divergence is infinite. 
 \end{itemize}
 \end{ex}
 
 \begin{rem}
 In Proposition \ref{prop:adaptiveCost}, convergence of the series $\sum_{i=1}^\infty \log\frac{1-p_i}{1-q_i}$ is sufficient but not necessary. It is possible that the sequence of partial sums is bounded but does not converge. 
 \end{rem}
 
 Note that for most adaptive rejection samplers, it is difficult to derive expressions for $p_i$, so it may not even be possible to verify whether the condition in Proposition \ref{prop:adaptiveCost} holds or not. The takeaway is that in general, an adaptive rejection sampler is not guaranteed to preserve privacy unless it is carefully designed to do so. 
 
 \section{Rejection samplers with data-independent runtime}\label{s:results}
 
The previous section showed that a rejection sampler (either simple or adaptive) can result in an arbitrary amount of privacy loss through the runtime. 
The most direct way to avoid this is to ensure that the runtime does not depend on the dataset. \citet{haeberlen2011differential} propose making the runtime a constant, though this is not strictly necessary. Rather, when the runtime is a random variable (as with rejection sampling), we simply need that its distribution does not depend on the dataset.
 
 In this section we propose three modifications of the rejection sampling algorithm to ensure data-independent runtime. The first method, which requires the weakest assumptions, fixes the number of iterations independent of the dataset, based on a worst-case acceptance probability. This method has a constant runtime, but there is a small probability that a sample is not accepted, and we quantify the additional privacy cost. 
 The second method is based on the memoryless property of the geometric distribution, and introduces an additive random wait-time. 
 This approach however requires the integrating constant of the target distribution corresponding to the current database, as well as the acceptance probability of a worst-case database, which is often not realistic. 
 The third method avoids this by using instead upper and lower bounds for the target densities of all databases, chosen so that the ratio of the area for the upper and lower bounds is constant across databases. 
 Finally, we propose an adaptive rejection sampler with data-independent runtime, which is a modification of the (nearly) minimax optimal sampler of \citet{achddou2019minimax}. Our sampler is entirely automated, and only requires that the family of target densities is log-H\"{o}lder with fixed and known parameters
 
 We show in Section \ref{s:examples} that many commonly studied privacy mechanisms satisfy the assumptions of our methods allowing for our privacy-preserving rejection samplers to be applied. 
 
 \subsection{Constant runtime, truncated rejection sampling}\label{s:truncated}
 One clear way to remove the privacy leak due to the runtime is to choose a number of iterations independent of the database, based on a worst-case estimate of the acceptance probability across all databases. We then run the sampler for that many iterations, and publish one of the accepted samples. In this case, the runtime is fixed, and does not leak any privacy. However, it is not guaranteed that an accepted sample is found within the pre-determined number of iterations, and the probability of this event \emph{does} depend on the database. This probability can be reduced by increasing the number of iterations, but  this also increases the runtime of the algorithm.
 
 Of the methods we propose, the algorithm in Proposition \ref{prop:truncated} requires the weakest assumptions in that the only knowledge we require is a lower bound on the acceptance probability across the databases. However, there is a small probability that no samples are accepted in the prescribed number of iterations, which negatively impacts both the privacy and the utility of the mechanism. Proposition \ref{prop:truncated} characterizes the increased cost to privacy of the truncated sampler in terms of $(\ep,\de)$-DP. 
 
 \begin{prop}\label{prop:truncated}
Let $\{M_D\mid D\in \mscr D\}$ be a family of mechanisms satisfying $(\ep_0,\de_0)$-DP and $(U_D,c_D)$ be such that $\twid\pi_D\leq c_DU_D$ where $\twid \pi_D$ is an unnormalized density for $M_D$. Assume that $\alpha_0\leq 1/c_D\int \twid\pi_D(x) \ dx$  for all $D$, that is, $\alpha_0$ is a lower bound on the acceptance probability in the rejection sampler across all databases. Given $\delta>0$, run the sampler for $N = \frac{\log(1/\de)}{\log(1/(1-\alpha_0))}$ iterations. If there is an accepted proposal, publish the first one; if not, publish an arbitrary output (such as one more draw from the proposal). Releasing the output as well as the runtime of this algorithm satisfies $(\ep_0, \delta_0+\delta)$-DP. 
\end{prop}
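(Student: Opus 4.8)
The plan is to separate the two channels---the runtime and the published output---and control each. By construction the algorithm always performs a fixed number $N$ of iterations and only afterwards reports the first accepted proposal, so the runtime is the deterministic constant $N$, independent of $D$. It therefore carries no information about the database and contributes nothing to the privacy loss, and it suffices to establish $(\ep_0,\de_0+\de)$-DP for the \emph{output} mechanism alone, which I will call $\twid M_D$. Before that I would record the one elementary computation behind the choice of $N$: since the per-iteration acceptance probability is $p_D\ge\alpha_0$, the chance that no proposal is accepted in $N$ iterations is $(1-p_D)^N\le(1-\alpha_0)^N$, and because $N\log(1-\alpha_0)=-\log(1/\de)$ we get $(1-\alpha_0)^N=\de$ exactly. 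Hence the failure probability $\gamma_D\defeq(1-p_D)^N$ satisfies $\gamma_D\le\de$ for every $D$.

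Next I would write $\twid M_D$ as an explicit mixture. The iterations are i.i.d., and conditioned on acceptance at any given iteration the accepted value is distributed exactly as $\pi_D$; hence, conditioned on the event that at least one of the first $N$ proposals is accepted, the first accepted sample is an exact draw from $\pi_D=M_D$. Writing $\nu_D$ for the (arbitrary, possibly data-dependent) law of the output on the failure event, this yields the clean decomposition
\[ \twid M_D = (1-\gamma_D)\,M_D + \gamma_D\,\nu_D, \qquad \gamma_D\le\de, \]
which reduces the entire problem to perturbing a known $(\ep_0,\de_0)$-DP mechanism $M_D$ by replacing at most a $\de$-fraction of its mass with an adversarial law. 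It then remains to verify that, for all adjacent $D,D'$ and all measurable $B$,
\[ \twid M_D(B)\le e^{\ep_0}\,\twid M_{D'}(B)+\de_0+\de. \]

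I would prove this by bounding the difference $\twid M_D(B)-e^{\ep_0}\twid M_{D'}(B)$ directly from the mixture form, using $(\ep_0,\de_0)$-DP of $M$ in the form $M_D(B)\le e^{\ep_0}M_{D'}(B)+\de_0$, the bounds $\gamma_D,\gamma_{D'}\le\de$ and $\nu_D(B)\le1$, and the nonnegativity of the term $e^{\ep_0}\gamma_{D'}\nu_{D'}(B)$ appearing on the right-hand side.

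I expect this last step to be the only real obstacle. The tempting route---bound $\twid M_D(B)$ by $M_D(B)+\de$, then $M_D(B)$ by $e^{\ep_0}M_{D'}(B)+\de_0$, and finally $M_{D'}(B)$ by $\twid M_{D'}(B)/(1-\gamma_{D'})$---over-counts the perturbation and produces a spurious surplus of order $(1+e^{\ep_0})\de$ rather than the claimed $\de$. The fix is to avoid passing through $M_{D'}$ with multiplicative slack. I would split on whether $\gamma_D\ge\gamma_{D'}$ or $\gamma_D<\gamma_{D'}$ (failure more, resp.\ less, likely under $D$): in the first case one checks directly that $(1-\gamma_D)M_D(B)\le e^{\ep_0}(1-\gamma_{D'})M_{D'}(B)+\de_0$, so the only surplus is $\gamma_D\nu_D(B)\le\de$; in the second, the lost mass is absorbed either by the retained term $e^{\ep_0}\gamma_{D'}\nu_{D'}(B)$ or, when that term vanishes, by the fact that any configuration forcing a loss must have $M_{D'}(B)$ small. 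Most transparently, one may treat $M_{D'}(B),\nu_D(B),\nu_{D'}(B)$ as free variables subject to the single constraint $M_D(B)\le e^{\ep_0}M_{D'}(B)+\de_0$ and maximize the difference; the maximum is attained at $\nu_D(B)=1,\ \nu_{D'}(B)=0$ and is bounded by $\de_0+\gamma_{D'}(1-\de_0)\le\de_0+\de$, which is exactly the claimed guarantee.
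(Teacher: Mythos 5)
Your proof is correct, and its final step takes a genuinely different route from the paper's. Both proofs share the same setup: the runtime is the constant $N$, the failure probability is $(1-p_D)^N\le(1-\alpha_0)^N=\de$ uniformly over databases, and conditional on acceptance the published sample is an exact draw from $M_D$. Where the paper then invokes standard machinery---it views the released output as a post-processing of the pair (sample from $M_D$, acceptance indicator), notes the indicator alone is a $(0,\de)$-DP mechanism, and applies basic composition \citep[Theorem 3.16]{dwork2014algorithmic} to get $(\ep_0,\de_0+\de)$-DP---you instead write the output law explicitly as the mixture $\twid M_D=(1-\gamma_D)M_D+\gamma_D\nu_D$ and verify the DP inequality by a bare-hands optimization over the adversarial quantities $M_{D'}(B),\nu_D(B),\nu_{D'}(B),\gamma_D,\gamma_{D'}$. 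Your optimization is sound: the supremum of $\twid M_D(B)-e^{\ep_0}\twid M_{D'}(B)$ is attained at $\nu_D(B)=1$, $\nu_{D'}(B)=0$, $M_D(B)=\min\{1,e^{\ep_0}M_{D'}(B)+\de_0\}$, and equals $\de_0+(1-\de_0)\max\{\gamma_D,\gamma_{D'}\}\le\de_0+\de$ (a tiny imprecision: you state the bound with $\gamma_{D'}$ alone, but when $\gamma_D>\gamma_{D'}$ the extremal value involves $\gamma_D$; either way it is at most $\de_0+\de$). What each approach buys: the paper's argument is shorter and modular, reusing composition and post-processing as black boxes; yours is self-contained (essentially re-proving the needed special case of composition), yields the marginally sharper constant $\de_0+(1-\de_0)\de$, and makes explicit why the naive total-variation perturbation route---which is exactly the $(\ep,\de+(1+e^{\ep})\gamma)$ bound of \citet{minami2016differential}---is too lossy here and how the slack is recovered. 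Your identification of that pitfall, and the case split on $\gamma_D\gtrless\gamma_{D'}$ to avoid it, is the real content of your proof and is handled correctly.
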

\begin{proof}
First note that the runtime is constant for all $D$, so there is no privacy leak there. Next, note that conditional on the event that an accepted proposal is found, there is no additional privacy leak. So, we need to determine the probability that an accepted proposal is not found:
\begin{align*}
    P(\text{none accepted}) &= (1-P(\text{accept}))^N\leq (1-\alpha_0)^N=(1-\alpha_0)^{\frac{\log(\de)}{\log(1-\alpha_0)}}=\delta.
\end{align*}
By itself, simply publishing whether a sample is accepted or not satisfies $(0,\de)$-DP. By post-processing (Proposition \ref{prop:postprocess}), we can upper bound the privacy cost by instead considering if we observe both an output from $M_D$ as well as whether the algorithm has accepted or rejected a sample. This is a composition of an $(\ep_0,\de_0)$-DP mechanism with a $(0,\de)$-DP mechanism. By composition \citep[Theorem 3.16]{dwork2014algorithmic} the result satisfies $(\ep_0,\de_0+\de)$-DP.
\end{proof}

A benefit of the algorithm in Proposition  \ref{prop:truncated} is that it can be vectorized and is embarrassingly parallelizable. Another benefit is that $N$ grows only in the log of $1/\delta$. By increasing the number of iterations $N$, the increased $\de$ can be reduced exponentially.  The two major downsides are that the algorithm must be run much longer than a simple rejection sampler, and that it is not guaranteed that an accepted sample is found, which reduces both the privacy and utility. If no samples are accepted, then the output does not follow the correct distribution, introducing error in the sampling approximation. We see that we are able to remove the runtime side-channel, but at the cost of a small ``delta'' and loss of utility. In the next two subsections, we show that with slightly stronger assumptions, we are able to obtain both perfect sampling as well as data-independent runtime. 

\subsection{Additive geometric wait-time}\label{s:additive}
In this section, we use the memoryless property of the geometric distribution to introduce an additive wait time based on a lower bound on the acceptance probability. The result is that the runtime of the algorithm is geometric with acceptance rate equal to the worst-case dataset (or a lower bound on the acceptance probability).

The benefit of this method over the truncated rejection sampler is that a sample from the correct distribution is guaranteed, and the runtime is still independent of the database. The downside is that the acceptance probability (or equivalently the integrating constant) for the present database is required as well as a bound on the worst-case acceptance probability. Typically, rejection samplers do not assume that the integrating constant is known, however for low dimensional problems (e.g., $\leq 3$), it may be possible to numerically evaluate the integral. 

Lemma \ref{lem:memoryless} illustrates the memoryless property of the geometric distribution. Given a simple rejection sampler with acceptance probability $q$, we can add a random wait time to result in a total runtime that is distributed as $\mathrm{Geom}(p)$ for $p\leq q$. So, across databases, we can make all of the runtimes equal in distribution, calibrated to a worst case acceptance probability.

\begin{lem} \label{lem:memoryless}
Let $0 < p \leq q< 1$. Given $X_2\sim \mathrm{Geom}(q)$, set $X_1=X_2$ with probability $p/q$ and otherwise $X_1=X_2+\Delta$, where $\Delta\sim \mathrm{Geom}(p)$. Then $X_1\sim \mathrm{Geom}(p)$. 
\end{lem}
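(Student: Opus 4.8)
The plan is to verify the claim through probability generating functions (PGFs), which turns the statement into a short algebraic identity and avoids any case analysis. Recall that for $\mathrm{Geom}(r)$ supported on $\{1,2,3,\ldots\}$ the PGF is $G_r(s) = \EE[s^{T}] = \frac{rs}{1-(1-r)s}$. Since $\Delta$ is drawn independently of $X_2$, conditioning on which of the two branches is taken gives
\[ \EE[s^{X_1}] = \frac pq\, G_q(s) + \left(1-\frac pq\right) G_q(s)\, G_p(s) = G_q(s)\left[\frac pq + \left(1-\frac pq\right) G_p(s)\right], \]
using that the PGF of a sum of independent variables is the product of the individual PGFs. It then remains to show this equals $G_p(s)$.

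The key step is to simplify the bracketed factor. Substituting $G_p(s) = \frac{ps}{1-(1-p)s}$ and clearing denominators, I expect the bracket to collapse to $\frac{p\,(1-(1-q)s)}{q\,(1-(1-p)s)}$; the $p^2 s$ terms arising from $p/q$ and $1-p/q$ cancel, leaving a clean ratio. Multiplying by $G_q(s) = \frac{qs}{1-(1-q)s}$, the factors $q$ and $(1-(1-q)s)$ cancel, and what remains is exactly $\frac{ps}{1-(1-p)s} = G_p(s)$. Since the PGF determines the distribution on $\{1,2,\ldots\}$, this establishes $X_1\sim \mathrm{Geom}(p)$.

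I do not anticipate a genuine obstacle: once the PGF identity is set up the proof is a routine algebraic simplification, and the hypothesis $p \le q$ is used only to guarantee that $p/q \in (0,1]$ is a valid mixing probability. For readers who prefer an elementary argument, one can alternatively compute $P(X_1 = k)$ directly: the first branch contributes $p(1-q)^{k-1}$, and the second contributes $(1 - p/q)$ times the convolution $\sum_{j=1}^{k-1} P(X_2=j)\, P(\Delta = k-j)$; summing the finite geometric series $\sum_{j=1}^{k-1}(1-q)^{j-1}(1-p)^{k-j-1} = \frac{(1-p)^{k-1}-(1-q)^{k-1}}{q-p}$ (with the obvious limiting value when $p=q$) makes the $(1-q)^{k-1}$ terms cancel and leaves $p(1-p)^{k-1}$, the pmf of $\mathrm{Geom}(p)$. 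The convolution bookkeeping and the $p=q$ degeneracy are the only mildly delicate points in this second route, which is exactly why the PGF argument is preferable.
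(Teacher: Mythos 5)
Your proof is correct, and it takes a genuinely different route from the paper. The paper's own proof computes the pmf of $X_1$ directly: it expands $P(X_1=t)=\frac{p}{q}P(X_2=t)+(1-p/q)P(X_2+\Delta=t)$, writes the second term as an explicit convolution $\sum_{x=1}^{t-1}(1-q)^{x-1}q\,(1-p)^{t-x-1}p$, and collapses it with the partial-sum formula for a geometric series to obtain $p(1-p)^{t-1}$ --- exactly the elementary alternative you sketch in your final paragraph. Your primary argument instead passes to probability generating functions, where the convolution becomes a product and the whole lemma reduces to the algebraic identity
\[
G_q(s)\left[\frac{p}{q}+\left(1-\frac{p}{q}\right)G_p(s)\right]
=\frac{qs}{1-(1-q)s}\cdot\frac{p\,\bigl(1-(1-q)s\bigr)}{q\,\bigl(1-(1-p)s\bigr)}
=G_p(s),
\]
which I have checked and which holds exactly as you claim. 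What the PGF route buys is modularity and robustness: no series bookkeeping, and the degenerate case $p=q$ needs no separate treatment (the bracket simplification is valid throughout), whereas your closed form for the convolution sum requires a limiting interpretation there. What the paper's route buys is self-containedness: it needs nothing beyond summing a finite geometric series, while yours invokes the uniqueness theorem that a PGF determines a distribution on $\{1,2,\ldots\}$ --- standard, but an extra imported fact. One shared implicit assumption worth stating in either write-up: the coin deciding which branch is taken must be independent of $(X_2,\Delta)$; both your mixture decomposition of $\EE[s^{X_1}]$ and the paper's decomposition of $P(X_1=t)$ rely on this reading of the lemma.
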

\begin{proof}
Let $t\in \{1,2,\ldots, \infty\}$. Then 
\begin{align*}
    P(X_1=t)&=\frac{p}{q} P(X_2=t) + (1-p/q)P(X_2+\Delta=t)\\
    &=\frac{p}{q}(1-q)^{t-1}q + \frac{(q-p)}{q} \sum_{x=1}^{t-1}P(X_2=x)P(\Delta=t-x)\\
    &=p(1-q)^{t-1} + \frac{q-p}{q} \sum_{x=1}^{t-1} (1-q)^{x-1} q(1-p)^{t-x-1}p\\
    &=p(1-p)^{t-1},
\end{align*}
which is the pmf of $\mathrm{Geom}(p)$, as desired. To achieve the last line in the equations, we used the partial sum formula for a geometric series, and simplified the result.
\end{proof}

\begin{thm}\label{thm:memoryless}
Let $D\in \mscr D$ be a database and $\{\pi_D\mid D\}$ be the normalized target densities. Assume that for each $\pi_D$, we have normalized densities $U_D(x)$ as well as constants $c_{D}$ such that for all $x$, $\pi_D(x) \leq c_{D} U_D(x)$. Suppose we know a constant $c$ satisfying $c \geq \sup_D c_D$. Consider the following scheme:
\begin{enumerate}
    \item Run a rejection sampler, proposing from $U_D(x)$ and targeting $\pi_D(x)$ until acceptance
    \item Call the accepted sample $X$. Also draw $Y \sim \mathrm{Unif}(0,1)$. 
    \item If $ Y < c_D/c$, publish $X$, else wait for $\mathrm{Geom}(1/c)$ cycles before publishing $X$. 
\end{enumerate}
Then $X\sim \pi_D$, and the wait time follows $\mathrm{Geom}(1/c)$, which does not depend on $D$. 
\end{thm}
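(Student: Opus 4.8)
The plan is to separate the claim into two independent pieces: (i) that the published sample $X$ has the correct distribution $\pi_D$, and (ii) that the total runtime is $\mathrm{Geom}(1/c)$ regardless of $D$. The first piece is essentially immediate from the correctness of ordinary rejection sampling, so the real content lies in matching the runtime with the setup of Lemma \ref{lem:memoryless}.

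First I would pin down the acceptance probability of the inner rejection sampler. Since both $\pi_D$ and $U_D$ are normalized and $\pi_D(x)\leq c_D U_D(x)$, a proposal $X\sim U_D$ is accepted with probability $\int \frac{\pi_D(x)}{c_D U_D(x)} U_D(x)\,dx = \frac{1}{c_D}\int \pi_D(x)\,dx = \frac{1}{c_D}$. Hence the number of iterations $T_1$ in step 1 is distributed as $\mathrm{Geom}(1/c_D)$, and conditional on acceptance the sample $X$ follows $\pi_D$ by the standard rejection-sampling argument. Because the wait-time in step 3 depends only on the independent draw $Y$ and an independent geometric variable, and never on the value of $X$, delaying the publication of $X$ cannot alter its distribution; thus $X\sim \pi_D$ as claimed.

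For the runtime I would set $p = 1/c$ and $q = 1/c_D$, and observe that $c\geq \sup_D c_D$ gives $p\leq q$, matching the hypothesis $0<p\leq q<1$ of Lemma \ref{lem:memoryless}. The key bookkeeping step is that the branching threshold in step 3 is exactly $c_D/c = (1/c)/(1/c_D) = p/q$. Therefore the total runtime $T$ equals $T_1\sim \mathrm{Geom}(q)$ with probability $p/q$, and equals $T_1+\Delta$ with $\Delta\sim \mathrm{Geom}(1/c)=\mathrm{Geom}(p)$ otherwise, where $Y$ and $\Delta$ are drawn independently of $T_1$. This is precisely the construction in Lemma \ref{lem:memoryless} with $X_2 = T_1$, so the lemma yields $T\sim \mathrm{Geom}(p)=\mathrm{Geom}(1/c)$, which is free of $D$.

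The main obstacle is not any computation but the correct matching of parameters: recognizing that normalizing $\pi_D$ makes the per-iteration acceptance probability exactly $1/c_D$ (so the relevant geometric parameter is $1/c_D$, not an integral of an unnormalized density), and that the threshold $c_D/c$ coincides with the ratio $p/q$ demanded by Lemma \ref{lem:memoryless}. Once these identifications are in place the runtime claim follows directly from the lemma, while the independence of $Y$ and $\Delta$ from $X$ guarantees that sampling correctness is untouched.
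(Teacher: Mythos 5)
Your proof is correct and follows exactly the route the paper intends: the paper states Theorem \ref{thm:memoryless} without a written proof, leaving it as an immediate consequence of Lemma \ref{lem:memoryless}, and your parameter matching ($q=1/c_D$ as the acceptance probability of the normalized-density rejection sampler, $p=1/c$, branching threshold $c_D/c=p/q$) is precisely the intended reduction. Your added observation that $Y$ and $\Delta$ are independent of $X$, so the delay cannot perturb the law of the published sample, is the same implicit reasoning the paper relies on.
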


As compared to the truncated rejection sampler of Section \ref{s:truncated}, Theorem \ref{thm:memoryless} offers a perfect sampler with data independent runtime. This is  ideal as there is no loss to either privacy or utility through either approximate samples or a runtime side-channel. However, the downside of this method is that the acceptance probability for the current database must be known. Assuming that the proposal is normalized, this is equivalent to knowing the integrating constant for the target. While this may not be too cumbersome for low-dimensional settings, it becomes computationally intractable for high-dimensional distributions. In the next section, we give an alternative set of assumptions to remove the requirement of the integrating constant.

\begin{rem}\label{rem:wait}
A similar alternative to Theorem \ref{thm:memoryless} is as follows: during each step of the rejection sampler, if a sample is accepted, then with probability $c_D/c$ report the sample, and with probability $1-c_D/c$ do not report the sample. This results in the same runtime as Theorem \ref{thm:memoryless}. 

{\red We remark that this alternative algorithm has a similar flavor to the randomized response mechanism, one of the oldest privacy mechanisms \citep{warner1965randomized}. While beyond the scope of this paper, it may be worth investigating whether there is any deeper connection between this privacy-aware rejection sampler and randomized response. }
\end{rem}

 \begin{figure}
    \centering
    \includegraphics[width=.5\linewidth]{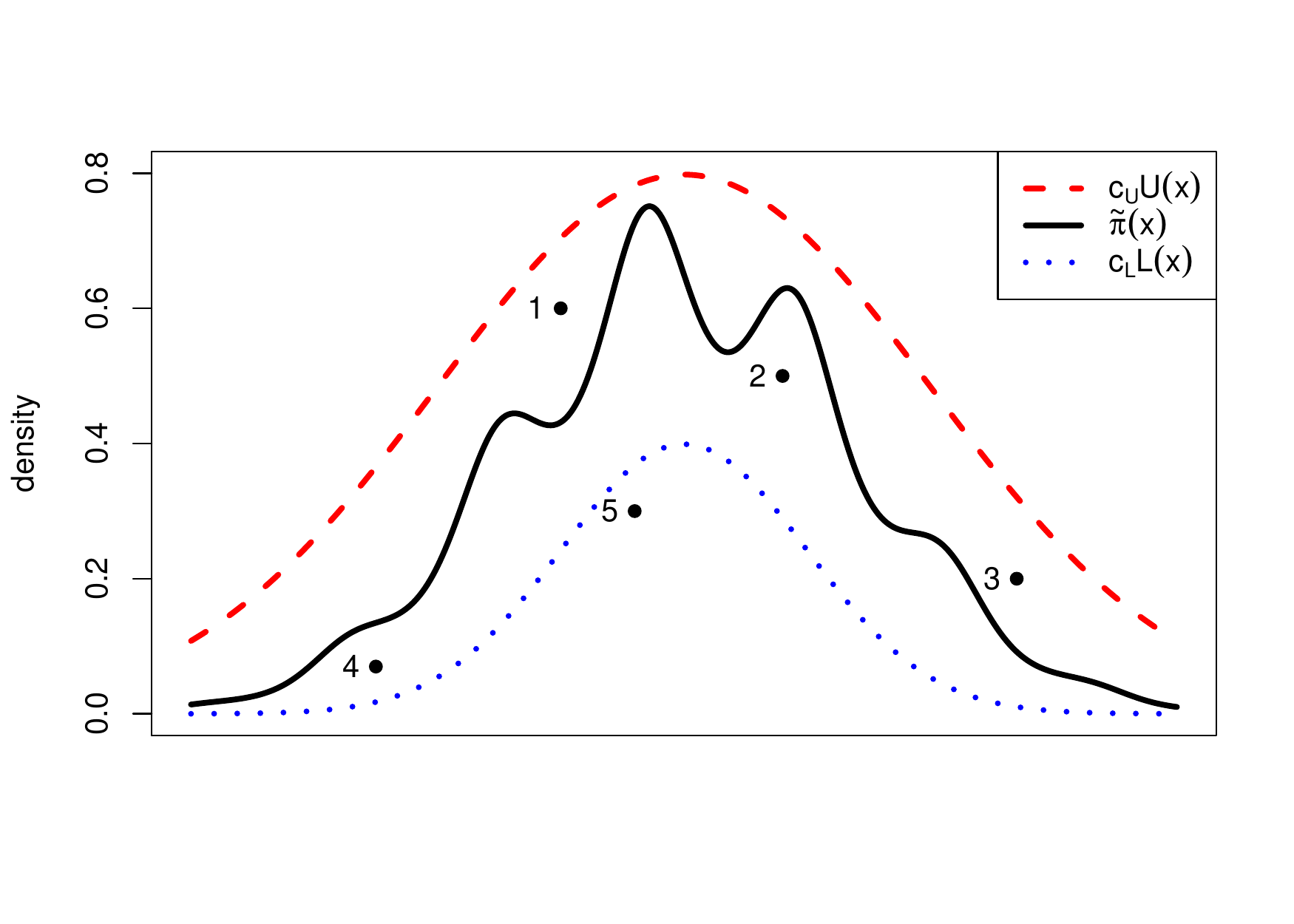}
    \caption{Example implementation of Algorithm \ref{alg:squeeze}. See Example \ref{ex:squeeze} for details. The sample $x_2$ is accepted, but not published until $x_5$.}
    \label{fig:squeeze}
\end{figure}
 \subsection{Implicit wait-time via squeeze function}\label{s:squeeze}
In this section, we propose another method of producing an exact rejection sampler with data-independent runtime. Our method, described in Algorithm \ref{alg:squeeze} and Theorem \ref{thm:squeeze}, avoids the need for the normalizing constant as in Theorem \ref{thm:memoryless} by instead using a carefully tailored {\em squeeze} function. {\red In the rejection sampling literature, a squeeze function is a lower bound on the target density which is assumed to be easy to evaluate, and which is used to speed up the computational time by avoiding evaluations of the target density when a proposed sample lies under the squeeze function (i.e.\ is rejected by the squeeze function,  see Example~\ref{ex:squeeze}). However, in this section, we will use the squeeze function not to speed up the computational time, but to slow it down; this will enable us to make the runtime equally distributed as in a worst-case setting. }

For this method, we assume that for each unnormalized target density $\twid\pi_D$, we have normalized densities $U_D(x)$ and $L_D(x)$ as well as constants $c_{L,D}$ and $c_{U,D}$ such that for all $x$ 
\[c_{L,D} L_D(x)\leq \twid\pi_D(x) \leq c_{U,D} U_D(x),\]
and such that the ratio $c_{L,D}/c_{U,D}$ does not depend on $D$. 
{\red Note that the latter condition is easy to enforce: if $c_{L,D}$ and $c_{U,D}$ are two valid constants, then so are $c^*_{L,D} < c_{L,D}$ and $c^*_{U,D}> c_{U,D}$.}
We then choose the value $X_s$ that we publish based on the rejection sampler that targets $\twid \pi_D$ from $U_D$, but do not publish the sample until a value is accepted from $L_D$ (i.e.\ the proposal lies under the squeeze function $c_{L,D}L_D$: see Example~\ref{ex:squeeze}). Because of this modification, the runtime is determined only by the ratio $c_L/c_U$, which is assumed to be constant across databases.   Thus, there is no additional privacy cost to using this sampler, since get an exact sample with runtime independent of $D$. This method is similar to that of Section \ref{s:additive} in that there is an additive wait-time, but Algorithm \ref{alg:squeeze} is able to do this implicitly, without knowing the acceptance probability for the current database. In Proposition \ref{prop:wait}, we show that the assumptions of Theorem \ref{thm:squeeze} are strictly weaker than those of Proposition \ref{thm:memoryless}.

 \begin{algorithm}
 \caption{Privacy-aware rejection sampling via squeeze functions}
 \scriptsize
 INPUT: $\twid\pi$, $U$, $L$, $c_U$, and $c_L$ such that $c_L L(x)\leq \twid \pi(x)\leq c_U U(x)$ for all $x$
 \begin{algorithmic}[1]
   \setlength\itemsep{0em}
   \STATE Set \texttt{anyAccepted}=\texttt{FALSE}
   \STATE Sample $X\sim U(x)$
   \STATE Sample $Y \sim \mathrm{Unif}(0,1)$ 
   \IF{$Y\leq \frac{\twid\pi(X)}{c_U U(X)}$ and \texttt{anyAccepted}==\texttt{FALSE}}
   \STATE Set $X_s=X$ 
   \STATE Set \texttt{anyAccepted}=\texttt{TRUE}
   \ENDIF
   \IF{$Y\leq \frac{c_LL(X)}{c_UU(X)}$}
   \STATE Publish $X_s$
   \ELSE
   \STATE Go to 2.
   \ENDIF
 \end{algorithmic}
 OUTPUT: $X_s$
 \label{alg:squeeze}
 \end{algorithm}

\begin{thm}\label{thm:squeeze}
Let $D\in \mscr D$ be a database and $\{\twid\pi_D\mid D\}$ be the (unnormalized) target densities. Assume that for each $\twid \pi_D$, we have normalized densities $U_D(x)$ and $L_D(x)$ as well as constants $c_{L,D}$ and $c_{U,D}$ such that the ratio $c_{L,D}/c_{U,D}$ does not depend on $D$ and such that 
for all $x$ $c_{L,D} L_D(x)\leq \twid\pi_D(x) \leq c_{U,D} U_D(x)$. Then the output of Algorithm \ref{alg:squeeze} with $\twid\pi=\twid\pi_D$, $U=U_D$, $L=L_D$, $c_{U} = c_{U,D}$, $c_L=c_{L,D}$ has distribution $\pi_D$ and runtime $\mathrm{Geom}(c_{L,D}/c_{U,D})$, which does not depend on $D$. 
\end{thm}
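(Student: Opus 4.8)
The plan is to track the two acceptance tests in Algorithm \ref{alg:squeeze} separately and show they govern the two claims almost independently. At each iteration we draw $X\sim U_D$ and $Y\sim\mathrm{Unif}(0,1)$; write $A$ for the event $\{Y\leq \twid\pi_D(X)/(c_{U,D}U_D(X))\}$ (the ordinary target acceptance, which assigns $X_s$ on its \emph{first} occurrence) and $B$ for the event $\{Y\leq c_{L,D}L_D(X)/(c_{U,D}U_D(X))\}$ (the squeeze acceptance, which triggers publication). The first thing I would record is that $c_{L,D}L_D(x)\leq\twid\pi_D(x)$ forces the squeeze threshold to lie below the target threshold, so $B\subseteq A$. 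Consequently the iteration at which the algorithm halts is itself a target acceptance, and the \texttt{anyAccepted} flag guarantees that $X_s$ has already been assigned by that time. This settles the only well-definedness worry, namely that we might reach a \texttt{Publish} step before $X_s$ is ever set.

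For the distributional claim I would argue that $X_s$ is exactly the proposal at the first occurrence of $A$: the flag prevents any later target acceptance from overwriting it, and the iterations run afterwards (solely to wait for $B$) never touch $X_s$. Hence $X_s$ is precisely the output of an ordinary rejection sampler with proposal $U_D$, constant $c_{U,D}$, and target $\twid\pi_D$, and the standard rejection-sampling computation for the first-$A$ value gives
\[
  P(X_s\in dx)=\frac{P(A,\,X\in dx)}{P(A)}=\frac{\big(\twid\pi_D(x)/c_{U,D}\big)\,dx}{\big(\int\twid\pi_D\big)/c_{U,D}}=\pi_D(x)\,dx,
\]
so $X_s\sim\pi_D$. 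The crucial point is that continuing past the first $A$ in order to wait for $B$ cannot bias $X_s$, because $X_s$ is a function only of the draws up to and including the first target acceptance.

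For the runtime claim I would observe that, since the draws are i.i.d., the events $B$ at distinct iterations are i.i.d.\ as well, so the runtime is the index of the first $B$, which is $\mathrm{Geom}(P(B))$. Because $L_D$ is a normalized density and the squeeze threshold never exceeds $1$, a one-line integral gives
\[
  P(B)=\int \frac{c_{L,D}L_D(x)}{c_{U,D}U_D(x)}\,U_D(x)\,dx=\frac{c_{L,D}}{c_{U,D}}\int L_D(x)\,dx=\frac{c_{L,D}}{c_{U,D}}.
\]
Thus the runtime is $\mathrm{Geom}(c_{L,D}/c_{U,D})$, and by the standing assumption this ratio is constant in $D$, so the runtime distribution is data-independent. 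The main obstacle — and the step I would be most careful about — is the distributional claim: one must be convinced that publishing $X_s$ only after the squeeze test also fires does not distort its law. The clean resolution is the decoupling above: $X_s$ depends only on the draws through the first $A$, whereas the halting time depends on the first $B\subseteq A$, so the implicit wait changes \emph{when} we publish but not \emph{which} value $X_s$ holds, leaving the target marginal $\pi_D$ intact while forcing the desired, data-independent geometric runtime.
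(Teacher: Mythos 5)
Your proof is correct and follows essentially the same route as the paper's: the paper likewise observes that $X_s$ is determined by the ordinary target-acceptance test (hence $X_s\sim\pi_D$ by standard rejection sampling), while publication is governed by a rejection sampler targeting $L_D$ with proposal $U_D$ and threshold $c_{U,D}/c_{L,D}$, giving runtime $\mathrm{Geom}(c_{L,D}/c_{U,D})$. You simply make explicit some details the paper leaves implicit — the nesting $B\subseteq A$ ensuring $X_s$ is set before publication, the integral computing $P(B)$, and the decoupling argument that waiting for $B$ does not bias $X_s$.
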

\begin{proof}
The published sample is determined by the condition $Y\leq \frac{\twid\pi_D(X)}{c_{U,D} U_D(X)}$, where $X\sim{\red U_D}(x)$ and $Y \sim \mathrm{Unif}(0,1)$. This is a simple rejection sampler, and so conditional on acceptance, $X\sim \pi_D$. However, a sample is not published until $Y\leq {\red \frac{c_{L,D}L_D(X)}{c_{U,D}U_D(X)}}$. This is a rejection sampler targeting $L_D(X)$, using the proposal $U_D(X)$ and threshold {\red $c_{U,D}/c_{L,D}$}. As such, the number of iterations is $\mathrm{Geom}(c_{L,D}/c_{U,D})$, which by assumption does not depend on $D$. 
\end{proof}

While the assumption of the squeeze functions in Theorem \ref{thm:squeeze} may seem unintuitive, it is in fact strictly weaker than knowing the integrating constant for $\twid \pi_D$, as was required in Section \ref{s:additive}, as shown in Proposition \ref{prop:wait}. In Section \ref{s:adaptive} and \ref{s:examples} we show that there are several natural instances of the exponential mechanism where the assumptions of Theorem \ref{thm:squeeze} are satisfied. 

\begin{prop}\label{prop:wait}
Let $D\in \mscr D$ be a database and $\{\pi_D\mid D\}$ be the normalized target densities. Assume that for each $\pi_D$, we have normalized densities $U_D(x)$ and constants $c_{U,D}$ such that $\pi_D(x)\leq c_{U,D} U_D(x)$. Choose a value $c\geq \sup_D c_{U,D}$. Then the squeeze function $L_D=\pi_D$, with constant $c_{L,D}=c_{U,D}/c$ satisfies the assumptions of Theorem \ref{thm:squeeze}, and the output of Algorithm \ref{alg:squeeze} has distribution $\pi_D$ and runtime $\mathrm{Geom}(1/c)$.
\end{prop}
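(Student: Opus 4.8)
The plan is to verify directly that the proposed squeeze function $L_D = \pi_D$ together with the constant $c_{L,D} = c_{U,D}/c$ meets every hypothesis of Theorem \ref{thm:squeeze}, and then to read off the conclusion from that theorem. Since Proposition \ref{prop:wait} starts from the weaker setup of Section \ref{s:additive}, where only the upper bounds $U_D$ and constants $c_{U,D}$ are assumed, the entire content of the proof is to show that these ingredients alone already furnish an admissible pair of squeeze functions. I do not expect a genuine obstacle here: the difficulty is purely bookkeeping, and the one point that must not be overlooked is that the hypothesis $c \geq \sup_D c_{U,D}$ is exactly what makes the lower bound valid.

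Concretely, I would check the four requirements of Theorem \ref{thm:squeeze} in turn. Because the targets are already normalized, the unnormalized target is $\twid\pi_D = \pi_D$. First, $L_D = \pi_D$ is by hypothesis a normalized density, so it is an admissible lower-bounding density. Second, the upper bound $\pi_D(x) \leq c_{U,D} U_D(x)$ is precisely the standing assumption of the proposition. Third, for the lower bound I must show $c_{L,D} L_D(x) \leq \twid\pi_D(x)$, i.e.\ $(c_{U,D}/c)\,\pi_D(x) \leq \pi_D(x)$ for all $x$; this reduces to $c_{U,D}/c \leq 1$, which holds since $c \geq \sup_D c_{U,D} \geq c_{U,D}$. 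Fourth, the ratio $c_{L,D}/c_{U,D} = (c_{U,D}/c)/c_{U,D} = 1/c$ is manifestly independent of $D$, which is the crucial condition that Theorem \ref{thm:squeeze} demands.

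With all hypotheses verified, I would invoke Theorem \ref{thm:squeeze} directly: the output of Algorithm \ref{alg:squeeze} has distribution $\pi_D$, and its runtime is $\mathrm{Geom}(c_{L,D}/c_{U,D}) = \mathrm{Geom}(1/c)$, which does not depend on $D$. The only conceptual remark worth making is that taking $L_D = \pi_D$ collapses the squeeze function onto the target itself, so the squeeze test in Algorithm \ref{alg:squeeze} simply re-runs the acceptance test against a rescaled copy of $\pi_D$; the wait-time is then governed entirely by the $D$-independent ratio $1/c$, which is exactly why the integrating constant required in Section \ref{s:additive} is no longer needed. This establishes that the assumptions of Theorem \ref{thm:squeeze} are strictly weaker than those of Theorem \ref{thm:memoryless}, as claimed.
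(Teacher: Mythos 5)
Your proof is correct and follows essentially the same route as the paper's: both verify that $c \geq c_{U,D}$ gives $c_{L,D} = c_{U,D}/c \leq 1$, hence $c_{L,D}\pi_D(x) \leq \pi_D(x)$, and that the ratio $c_{L,D}/c_{U,D} = 1/c$ is database-independent, then invoke Theorem \ref{thm:squeeze}. Your version is slightly more explicit in enumerating each hypothesis of that theorem, but the argument is the same.
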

\begin{proof}
Since $c\geq c_{U,D}$, we have that $c_{L,D}=c_{U,D}/c\leq 1$. So, $c_{U,D}L_D(x)\leq \pi_D(x)$ for all $x$. Then, the runtime of Algorithm \ref{alg:squeeze} is geometric with parameter $(c_{L,D}/c_{U,D})=1/c$, and the output of Algorithm \ref{alg:squeeze} has the appropriate distribution as argued in the proof of Theorem \ref{thm:squeeze}. 
\end{proof}

In fact, the application of Algorithm \ref{alg:squeeze} described in Proposition \ref{prop:wait} is very similar to the variation of Theorem \ref{thm:memoryless} described in Remark \ref{rem:wait}.

\begin{rem}
{\red 
Proposition \ref{prop:wait} showed that the assumptions for the the squeeze functions in Theorem \ref{thm:squeeze}, but are actually strictly weaker than the assumptions needed in Section \ref{s:additive}. Furthermore, it can be seen that the assumptions of Theorem \ref{thm:squeeze} (assuming that we can evaluate the constant $c_{L,D}/c_{U,D}$) are {\emph strictly stronger} than knowing the worst-case acceptance probability, which is needed for the truncated sampler of Section \ref{s:truncated} -- this is because the ratio  $c_{L,D}/c_{U,D}$ is itself a lower bound on the worst-case acceptance probability. }
\end{rem}

\begin{ex}\label{ex:squeeze}
Figure \ref{fig:squeeze} is an illustration of how Algorithm \ref{alg:squeeze} works. We see an example of a target $\twid \pi$, which satisfies $c_L L(x)\leq \twid \pi(x) \leq c_U U(x)$ for constants $c_L$, $c_U$, a proposal function $U$ and squeeze function $L$. The points $(x_i,y_i)$ are sequentially drawn uniformly within the area under $c_UU$; equivalently, $x_i \sim U(x)$ and $y_i=u_i\cdot c_U U(x)$, where $u_i \iid \mathrm{Unif}(0,1)$. Algorithm \ref{alg:squeeze} processes these samples as follows: For the first pair, $y_1>\twid \pi(x_1)$ so the sample is rejected. The second sample satisfies $y_2\leq \twid \pi(x_2)$ so it is accepted (set $X_s=x_2$), but because $y_2>c_L L(x_2)$ it is not published yet. The third sample is rejected since $y_3>\twid \pi(x_3)$. The fourth sample satisfies $y_4\leq \twid \pi(x_4)$, but since we already accepted $x_2$, we do not update $X_s$. Since $y_4> c_L L(x_4)$ we still do not publish anything yet. Finally, $y_5\leq c_L L(x_5)$ so we publish $X_s=x_2$. 

As noted in Theorem \ref{thm:squeeze}, the procedure results in $X_s\sim  \pi$, but the {\red runtime} is distributed as $\mathrm{Geom}(c_L/c_U)$, which does not directly depend on $\twid \pi$. 
\end{ex}
 \subsection{Adaptive rejection sampler for log-H\"{o}lder densities}\label{s:adaptive}
 The previous three subsections proposed modifications to simple rejection samplers in order to remove the runtime side-channel. In this section, we use the squeeze method of Section \ref{s:squeeze} to develop an adaptive rejection sampler with data-independent runtime for log-H\"{o}lder densities. Our method, outlined in Algorithm~\ref{alg:adapt}, is entirely black box, requiring only H\"{o}lder parameters $(s,H)$ that hold for every database, and is a modification of the (nearly) minimax optimal sampler of \citet{achddou2019minimax}. 
 Let $\pi_D(x) \propto \exp(\g_D(x))$ be an unnormalized  target density on a bounded convex set $C$, where $\g_D$ is $(s,H)$-H\"{o}lder for all datasets $D$: $|\g_D(x)-\g_D(y)|\leq H \lVert x-y\rVert^s$ for all $D$ and for all $x,y\in C$.  
 This setup differs from \citet{achddou2019minimax}, who assume that the target itself is H\"{o}lder, rather than the log-target. 
 This difference is important in order to derive upper and lower bounds that satisfy a property similar to Theorem \ref{thm:squeeze}. 
 We point out in Remark \ref{rem:holder} that the log-H\"{o}lder assumption, with the same $s$ and $H$ across all datasets, is natural for many privacy mechanisms, and many instances of the exponential mechanism in the literature satisfy this assumption. 
 
 {\red At a high-level, given evaluations of $g_D(x)$ at a finite set of locations, the log-H\"older assumption allows us to construct piecewise-constant upper and lower bounds on $g_D(x)$ and therefore the target density.
 Importantly, these bounds can be constructed so that the ratio of their associated normalization constants is independent of the database $D$.
 Then, in the fashion of Algorithm~\ref{alg:squeeze}, by proposing from the upper bound, and stopping only on accepting from the lower bound, we can have a database independent runtime.
 Following each proposal, we add a new location to our set of evaluations of $g_D(x)$, tightening the lower and upper bounds, and ensuring the acceptance probabability increases each iteration.
 We describe these steps in detail in Algorithm~\ref{alg:adapt}.
 }

  \begin{algorithm}
 \caption{Privacy-aware adaptive rejection}
 \scriptsize
 INPUT: $g$ an $(s,H)$-H\"{o}lder function on a bounded convex set $C\subset \RR^d$ for some norm $\lVert \cdot \rVert$, initial evaluation points $\{(x_1,g(x_1)),\ldots,(x_n,g(x_n))\}$, and a ``nearest neighbor'' map $P_T(\cdot):C\rightarrow T$ for any finite set $T\subset C$, the number $N$ of i.i.d. samples desired from $\pi(x)\propto \exp(g(x))I(x\in C)$
 \begin{algorithmic}[1]
   \setlength\itemsep{0em}
   \STATE Set {\red \texttt{anyAccepted}}=\texttt{FALSE},
   \texttt{numSamples}=0, and
   \texttt{publishedSamples}$=\emptyset$
   \STATE Set $S = \{(x_1,g(x_1)),\ldots,(x_n,g(x_n))\}$, and
   $T=\{x\mid (x,y)\in S \text{ for some } y\}$
   \WHILE{\texttt{numSamples}$<N$}
    \STATE Define $\hat g(x)=g(P_T(x))$ for all $x\in C$ {\red (note that this only requires evaluations of $g$ from $S$)}
    \STATE Set $\hat r\geq \sup_{x\in C} H\lVert x-P_T(x)\rVert^s$
    \STATE Sample $X\sim \exp(\hat g(x))/(\int_{C} \exp(\hat g(x)) \ dx)$
    \STATE Sample $Y\sim \mathrm{Unif}(0,1)$
    \IF{$Y\leq \exp(g(X))/\exp(\hat g(X)+\hat r)$ and  \texttt{anyAccepted}=\texttt{FALSE} }
    \STATE Set $X_s=X$
    \STATE Set  \texttt{anyAccepted}=\texttt{TRUE}
    \ENDIF
    \IF{$Y\leq \exp(-2\hat r)$}
    \STATE Publish $X_s$ and append $X_s$ to \texttt{publishedSamples}
    \STATE Increment  \texttt{numSamples} by 1
    \STATE Set  \texttt{anyAccepted}=\texttt{FALSE}
    \ENDIF
    \STATE Choose $Z\in C\setminus T$ either randomly or deterministically based on only $T$, $H$ and $s$
    \STATE Append $(Z,g(Z))$ to $S$
    \STATE Append $Z$ to $T$
    \ENDWHILE
    
     OUTPUT: \texttt{publishedSamples}, which can be published in a stream
 \end{algorithmic}
 \label{alg:adapt}
 \end{algorithm}
 
 \begin{thm}\label{thm:adapt}
 Let $\mscr D$ be a space of databases and $\{\twid \pi_D=\exp(g_D)\mid D\}$ be the unnormalized target densities, which have support on a bounded convex set $C$. Suppose that for all $D$, $g_D$ is $(s,H)$-H\"{o}lder with norm $\lVert\cdot \rVert$ on $C$. Then Algorithm \ref{alg:adapt} results in $N$ i.i.d. samples from $\twid \pi$ and has runtime between published samples which does not depend on $D$. If the mapping $P_T$ and the update procedure to generate $Z$ are chosen in a way that $\sup_{x\in C}\lVert x-P_T(x)\rVert\rightarrow 0$, then the probability of publishing an accepted sample in a given iteration converges to $1$. 
 \end{thm}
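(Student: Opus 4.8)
The plan is to reduce the correctness and runtime claims to the analysis of the squeeze sampler (Theorem \ref{thm:squeeze}), treating each iteration's piecewise-constant approximation as supplying the upper and lower bounds, and then to track how adaptivity interacts with the database-independence of the geometry. First I would record the consequence of the $(s,H)$-H\"older assumption: at iteration $i$, with evaluation set $T_i$, nearest-neighbour map $P_{T_i}$, and $\hat r_i \geq \sup_{x\in C} H\lVert x - P_{T_i}(x)\rVert^s$, the approximation $\hat g_i(x) = g_D(P_{T_i}(x))$ satisfies $|g_D(x) - \hat g_i(x)| \leq \hat r_i$ for all $x\in C$, hence $\exp(\hat g_i(x) - \hat r_i) \leq \exp(g_D(x)) \leq \exp(\hat g_i(x) + \hat r_i)$. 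Thus $\exp(\hat g_i + \hat r_i)$ plays the role of the (unnormalized) proposal $c_U U$ and $\exp(\hat g_i - \hat r_i)$ the role of the squeeze $c_L L$ of Algorithm \ref{alg:squeeze}, and the publish threshold $\exp(-2\hat r_i)$ is exactly the ratio of the squeeze to the proposal, i.e.\ the $c_{L}/c_{U}$ of Theorem \ref{thm:squeeze} for the current bounds.

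For the correctness claim, the key observation I would use is that the update rule generating $Z$ depends only on $T$, $H$, and $s$, so the whole sequence $(\hat g_i, \hat r_i)_{i\geq 1}$ is produced independently of the accept/publish randomness $(X_i, Y_i)_{i\geq 1}$; I would condition on this sequence and treat it as fixed. The standard rejection argument then shows that, conditioned on the acceptance event $A_i = \{Y_i \leq \exp(g_D(X_i))/\exp(\hat g_i(X_i)+\hat r_i)\}$, the proposal $X_i \sim \exp(\hat g_i)/Z_i$ has density proportional to $\exp(\hat g_i(x)) \cdot \exp(g_D(x))/\exp(\hat g_i(x)+\hat r_i) \propto \exp(g_D(x))$, so $X_i \sim \pi_D$ for every $i$, irrespective of the proposal. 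Since $X_s$ is set to the first accepted proposal and the publish step (event $B_i \subseteq A_i$) only delays, never alters, the stored value, the published sample equals the first-accepted sample, whose law is the $\pi_D$-mixture over the iteration of first acceptance and hence $\pi_D$ exactly (acceptance occurs almost surely because $\hat r_i$ is bounded by $H\,\mathrm{diam}(C)^s$, so the per-iteration publish probability is bounded below). Independence of the $N$ published samples follows because, conditioned on the proposal sequence, successive rounds use disjoint, independent blocks of $(X_i,Y_i)$, and each round returns a $\pi_D$ sample regardless of which iterations it occupies; marginalizing over the proposal sequence preserves i.i.d.-ness.

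For the runtime claim I would note that, starting from a reset, publication occurs at the first iteration $t$ with $Y_t \leq \exp(-2\hat r_t)$, so the between-publication runtime $T$ has the adaptive pmf $P(T=t) = p_t\prod_{i=1}^{t-1}(1-p_i)$ of Proposition \ref{prop:adaptiveCost} with $p_i = \exp(-2\hat r_i)$. Crucially, $\hat r_i$ depends only on the geometry of $T_i$ together with $H$ and $s$, and the evolution of $T_i$ depends only on $T,H,s$; none of these involve the evaluated values $g_D$, hence none depend on $D$. Therefore the law of $T$ is identical across databases. Finally, for the convergence claim, if $P_T$ and the update rule force $\sup_{x\in C}\lVert x - P_T(x)\rVert \to 0$, then $\hat r_i \to 0$, so the per-iteration publish probability $p_i = \exp(-2\hat r_i) \to 1$, as asserted.

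I expect the main obstacle to be the correctness argument under adaptivity: one must verify carefully that the changing proposal does not bias the accepted sample, that the accept-versus-publish coupling through the single uniform $Y_i$ (with $B_i\subseteq A_i$) merely postpones publication, and that the shared, growing evaluation set $T$ across the $N$ rounds does not introduce dependence among the published samples. The clean resolution is to condition on the (database-independent) proposal sequence first, which decouples the geometric/adaptive bookkeeping from the exactness-of-sampling bookkeeping and lets both be handled by the fixed-proposal arguments already established.
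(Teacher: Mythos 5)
Your proof is correct and takes essentially the same route as the paper: both establish the H\"older sandwich $\exp(\hat g(x)-\hat r)\leq \exp(g_D(x))\leq \exp(\hat g(x)+\hat r)$, identify these as the squeeze/proposal pair of Theorem \ref{thm:squeeze} with ratio $c_{L,D}/c_{U,D}=\exp(-2\hat r)$ independent of $D$, and conclude exactness of the published samples, data-independent runtime, and the convergence of the publish probability $\exp(-2\hat r_i)\rightarrow 1$. The only difference is one of rigor: the paper invokes Theorem \ref{thm:squeeze} in a single line, whereas you explicitly justify why that fixed-proposal theorem still applies under adaptivity---by conditioning on the sequence $(\hat g_i,\hat r_i)$, which is legitimate because the $Z$-updates depend only on $T$, $H$, and $s$ and are therefore independent of the accept/publish randomness---a detail the paper's proof leaves implicit.
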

 \begin{proof}
 The quantity $\hat r$ is an upper bound on the maximum difference between $g_D(x)$ and $\hat g(x)$, by the H\"older assumption. So, at any point in the algorithm, since $\g_D$ is H\"{o}lder, and by the definition of $\hat r$, we have that 
     \[\exp(\hat g(x)-\hat r)\leq \exp(g_D(x))\leq \exp(\hat g(x)+\hat r).\]
     Using the notation of Theorem \ref{thm:squeeze}, we have that ${\red c_{L,D}}=\exp(-\hat r)$, ${\red L_D}(x) = k\exp(\hat g(x))$, ${\red c_{U,D}} = \exp(\hat r)$ and ${\red U_D}(x) = k\exp(\hat g(x))$, where $k=(\int_{C} \exp(\hat g(x)) \ dx)^{-1}$. Since {\red $c_{L,D}/c_{U,D}= \exp(-2\hat r)$ does not depend on $D$, }
     by Theorem \ref{thm:squeeze} the published samples are drawn independently from $\pi_D$ and the {\red runtime} does not depend on $D$. 
     
     The probability of publishing a sample is $\exp(-2\hat r)$. So, as long as $\lVert x-P_T(x)\rVert$ decreases as more samples $Z$ are appended to $T$, we have $\hat r\rightarrow 0$ and thus the probability of publishing an accepted sample converges to $1$. 
 \end{proof}


Because the update step and the rejection step are separated, we can think about the best way to update the proposal function. Our goal should be to reduce $\hat r$ as quickly as possible. A simple, but naive solution would be to sample $Z$ uniformly on $C$. Another approach would be to choose a sequence of $(x_i)_{i=1}^\infty$ such that for any $N$, the subset $(x_1)_{i=1}^N$ consists of approximately equally spaced points in $C$. This could be done intelligently using sequential space-filling experimental designs (e.g., \citealp{crombecq2010generating,pronzato2012design}). For example, a greedy maximin solution would be to choose $z = \arg\sup_{z\in C} H\lVert z-P_T(z)\rVert^s$ \citep{pronzato2012design}, which maximizes the publishing probability for the next iteration. Computing the maximin solution may be possible in low-dimensions, but becomes intractable in high dimensional spaces.

As in \citet{achddou2019minimax}, we can make the adaptive sampler much easier to implement by considering the following special case of Algorithm \ref{alg:adapt}: 1) use the $\ell_\infty$ norm in the H\"{o}lder definition, 2) set $C=[0,1]^d$, 3) approximate the nearest neighbor calculation $P_T(y)$ on a grid, as described in \citet[Definition 4]{achddou2019minimax}. These modifications make the construction, evaluation, and sampling of the proposal  $\exp(\hat g)$ computationally efficient, even in high dimensions. The accept-reject steps (lines 6-16) and the update steps (lines 17-19) can be done in batches to avoid updating the function $\hat g$ too often, when it will not significantly improve the acceptance probability. 

\begin{rem}[Relative runtime]
We consider how the runtime of Algorithm \ref{alg:adapt} compares to a similar sampler without the privacy constraint. Recall that the acceptance probability of our sampler is $\exp(-2\hat r)$. If we use the same proposal distribution, but base the acceptance criteria solely on the target, then the acceptance probability depends on the target. For a typical target density, we expect that the acceptance probability is approximately $\exp(-\hat r)$. If this is the case, then as $\hat r\rightarrow 0$, the ratio of the rejection probabilities is 
\[\lim_{\hat r\rightarrow 0} \frac{1-\exp(-2\hat r)}{1-\exp(-\hat r)}=\lim_{\hat r \rightarrow 0}\frac{2\hat r}{\hat r}=2,\]
where we use a series expansion of $\exp(-x)$ about zero to evaluate the limit. This suggests that the cost of privacy is that the rejection probability is about double that of the non-private sampler. 

Another cost of the privacy-preserving adaptive sampler is the decoupling of the rejection and update steps. Roughly, we will need to evaluate $\g_D$ twice as often---one for the update and one for the accept/reject step---as compared to non-private adaptive samplers, such as in \citet{achddou2019minimax}. This additional cost is somewhat mitigated by the fact that the update points can be chosen in a more intelligent manner, improving the rate of convergence of the proposal. 
\end{rem}

\begin{figure}[t]
    \centering
    \begin{minipage}{.48\linewidth}
    \includegraphics[width=\linewidth]{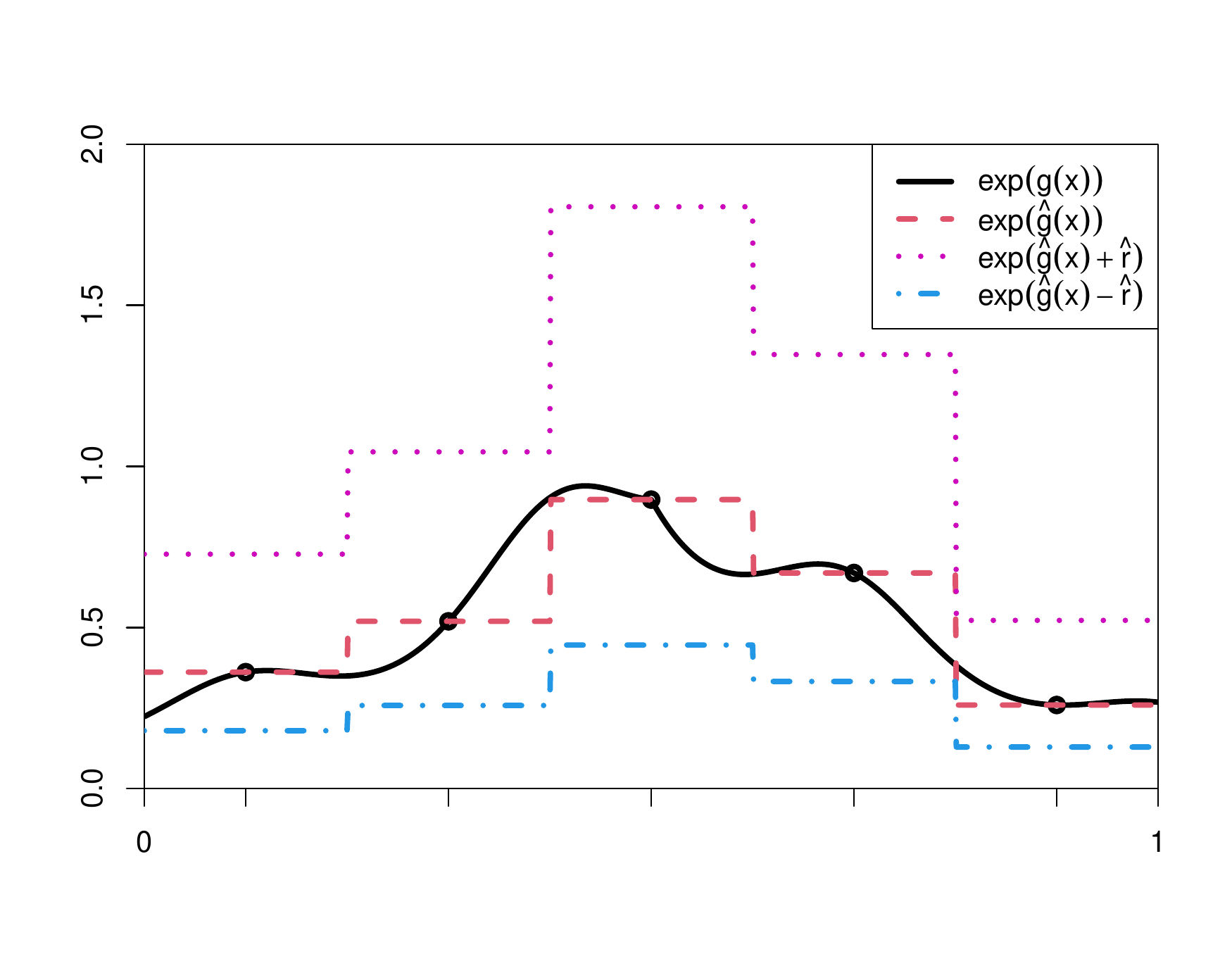}
        \vspace{-1.5cm}
    \end{minipage}
    \begin{minipage}{.48\linewidth}
    \includegraphics[width=\linewidth]{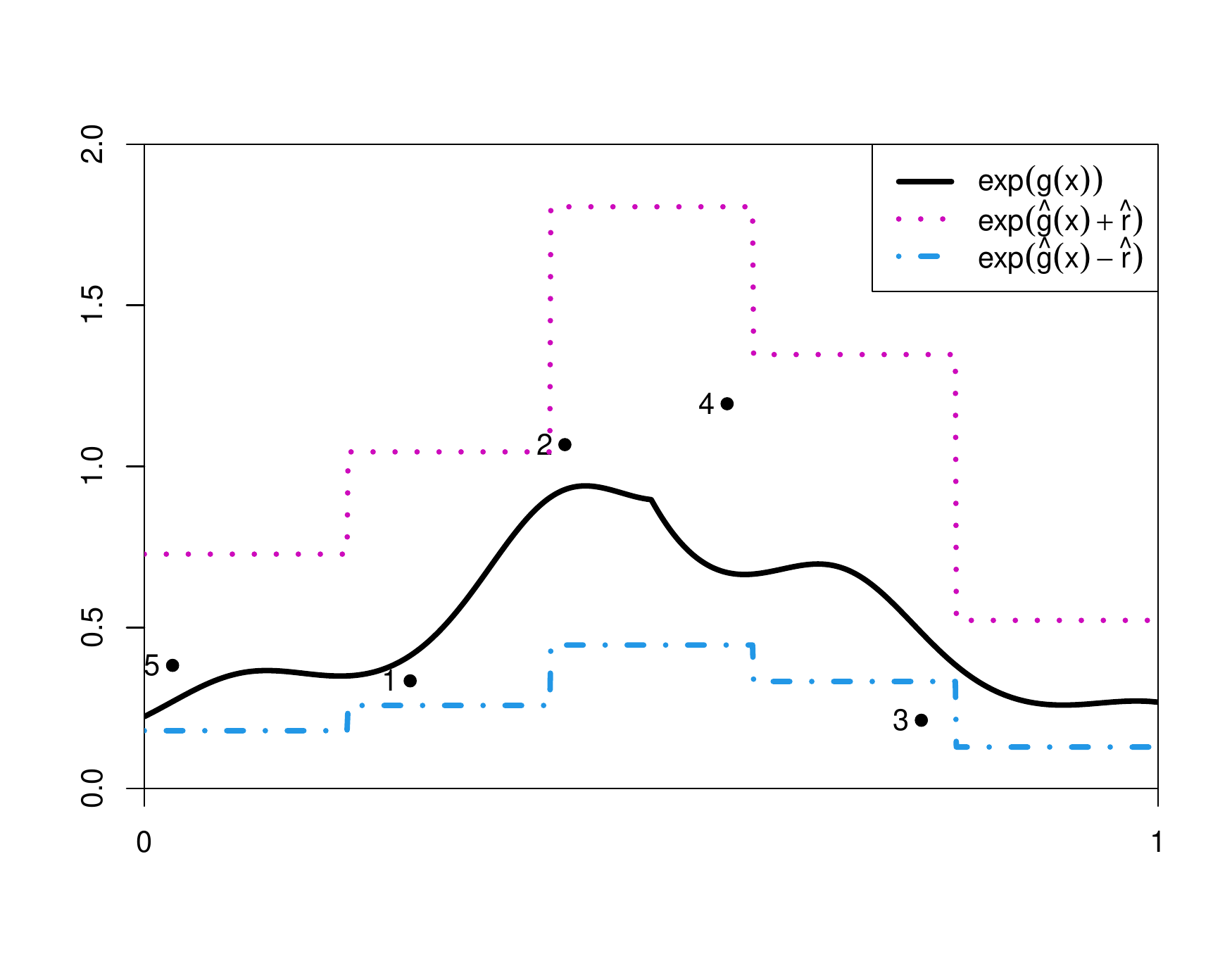}  
    \vspace{-1.5cm}
    \end{minipage}
    \begin{minipage}{.48\linewidth}
    \includegraphics[width=\linewidth]{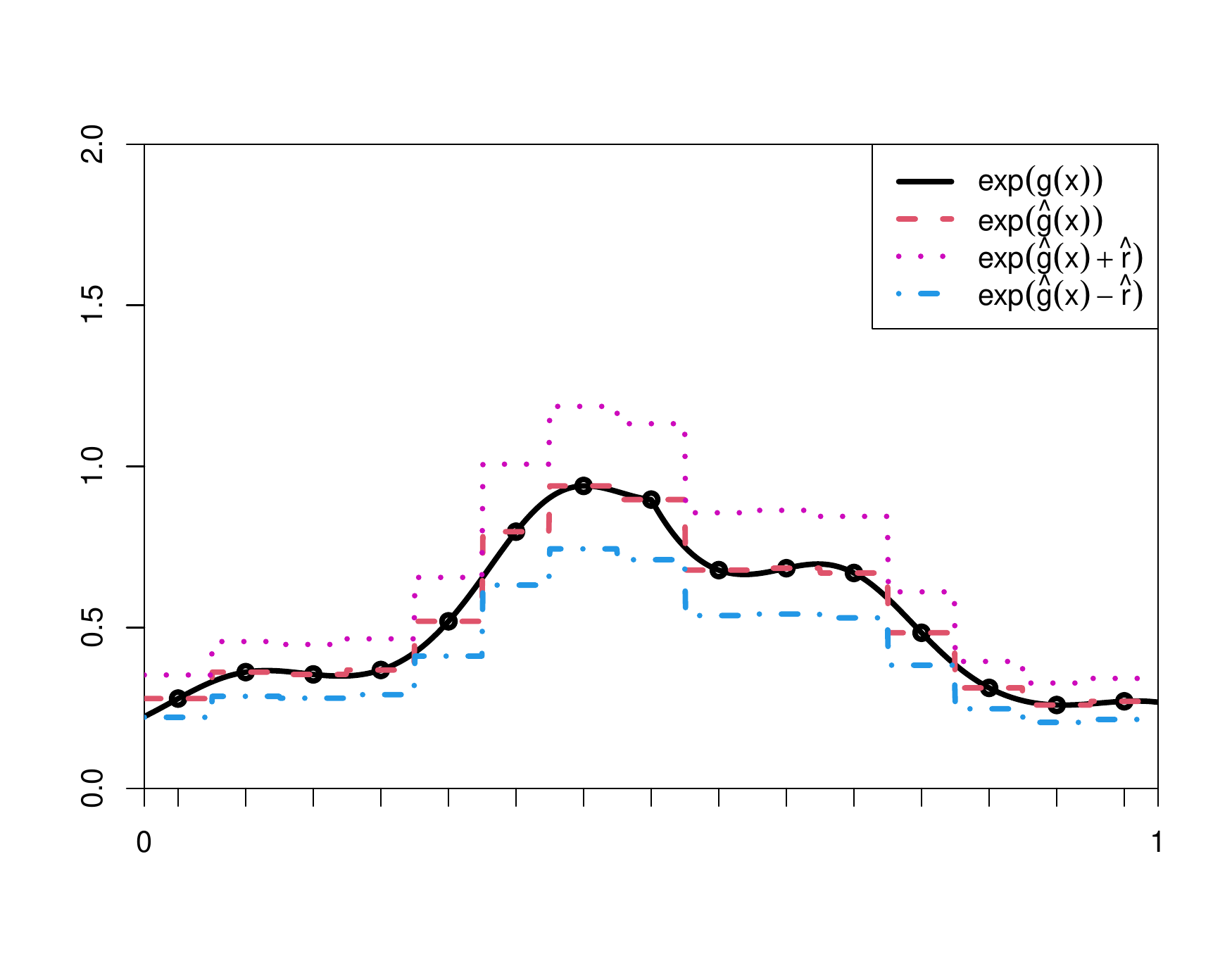}
        \vspace{-.5cm}
    \end{minipage}
    \begin{minipage}{.48\linewidth}
    \includegraphics[width=\linewidth]{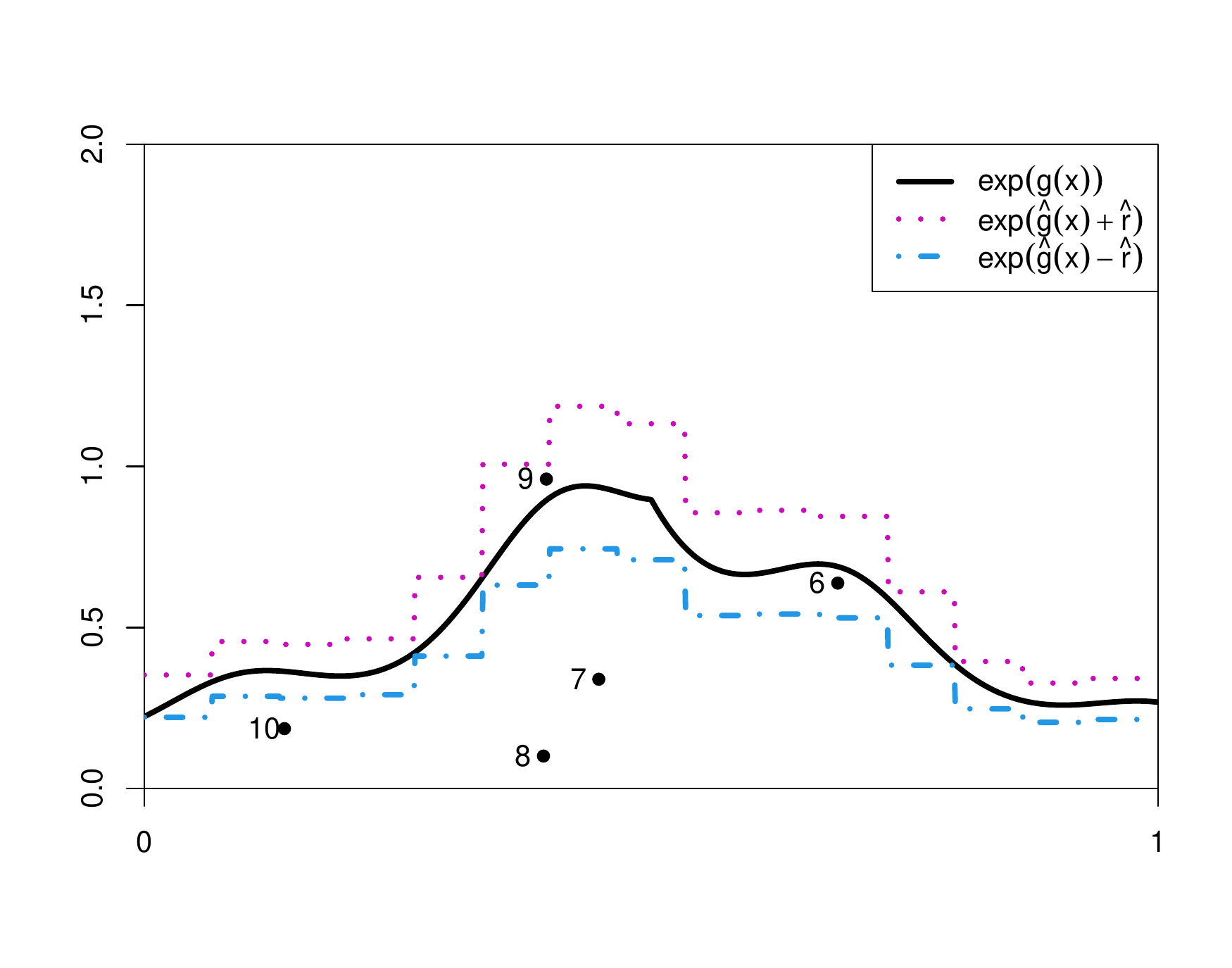}
        \vspace{-.5cm}
    \end{minipage}
    \caption{Implementation of Algorithm \ref{alg:adapt}, as explained in Example \ref{ex:adapt}. Plots progress in the normal reading order.}
    \label{fig:adapt}
\end{figure}

\begin{ex}\label{ex:adapt}
We illustrate Algorithm \ref{alg:adapt} applied to the target $\twid \pi = \exp(g(x))$, where  $g$ is the $7$-Lipschitz function $g(x) = -3|x-1/2|+(1/5)\sin(20x)$, so $H=7$ and $s=1$. The update and sampling steps of Algorithm \ref{alg:adapt} are run in batches. First, 5 equally spaces points are used to approximate $\exp(g(x))$ by a piece-wise linear function $\exp(\hat g(x))$. The upper bound is $\exp(\hat g(x)+\hat r)$ and the lower bound is $\exp(\hat g(x)-\hat r)$. The top left plot of Figure \ref{fig:adapt}, illustrates each of these functions. Next, five points $(x_i,y_i)$ for $i=1,\ldots, 5$ are sampled uniformly within the area under $\exp(\hat g(x)+\hat r)$, as seen in the top right plot of Figure \ref{fig:adapt}. The first value $y_1$ is below $\exp(g(x_1))$, but not below $\exp(\hat g(x_1)-\hat r)$, so we set $X_s=x_1$ and set $ \texttt{anyAccepted}=\texttt{TRUE}$, but do not publish $X_s$ yet. We reject $x_2$. Then as $y_3\leq \exp(\hat g(x_3)-\hat r)$, we publish $X_s=x_1$, and set $\texttt{anyAccepted}=\texttt{FALSE}$. We reject both $x_4$ and $x_5$. 

After this, we update the approximation $\hat g$ using 15 equally spaced points. This grid is a superset of the 5-point grid, so we can reuse the previous evaluations. The new approximation and bounds are shown in the bottom left plot of Figure \ref{fig:adapt}. Then $(x_i,y_i)$ for $i=6,\ldots, 10$ are sequentially sampled uniformly from the area under $\exp(\hat g(x)+\hat r)$, illustrated in the bottom right plot of Figure \ref{fig:adapt}. As $\exp(\hat g(x_6)-\hat r)<y_6\leq \exp(g(x_6))$, we set $X_s=x_6$ and $\texttt{anyAccepted}=\texttt{TRUE}$, but do not publish $X_s$. Since $y_7\leq \exp(\hat g(x_7)-\hat r)$, we publish $X_s=x_6$ at this time and set $\texttt{anyAccepted}=\texttt{FALSE}$. Then since $y_8\leq \exp(\hat g(x_8)-\hat r)$, we immediately publish $x_8$. We reject $x_9$. Last, as $y_{10}\leq \exp(\hat g(x_{10})-\hat r)$, we also publish $x_{10}$. 

Note that by using equally spaced points, $\hat r$ converges to zero rapidly, illustrating the benefit of using deterministically chosen points in the construction of $\hat g$. 
\end{ex}

\begin{rem}\label{rem:holder}
There are several prior DP works on the exponential mechanism, where the utility function is assumed to be Lipschitz (a special case of H\"{o}lder), and where Algorithm \ref{alg:adapt} can be applied. \citet{minami2016differential} assume Lipschitz and concave utility functions. 
\citet{bassily2014private} and \citet{bassily2014privateRevisited} derive optimal DP mechanisms under the assumption of Lipschitz and convex empirical risk objective functions, as well as a bounded domain, which result in implementations of the exponential mechanism. In part of their work, \citet{ganesh2020faster} assume Lipschitz and $L$-smooth utility functions in the exponential mechanism. 
\end{rem}

\section{Application to exponential mechanism sampling}\label{s:examples}
In this section, we explore some instances of the exponential mechansim that satisfy the assumptions of the rejection samplers proposed in Section \ref{s:results}, allowing for a privacy-preserving implementation.

\subsection{Strongly Concave and \texorpdfstring{$L$}\ -smooth log-density}\label{s:strongly}
We first consider instances of the exponential mechanism where the utility function $\g_D$ is both strongly concave and $L$-smooth. These are the same properties that \citet{ganesh2020faster} assume. 
Both \citet{awan2019benefits} and \citet{minami2016differential} assume strongly concave utility functions in the exponential mechanism. Other private empirical risk minimization works, while not working directly with the exponential mechansim, also commonly assume $L$-smooth and strong concavity \citep{kifer2012private,bassily2014private,bassily2014privateRevisited}.

Under the strongly concave and $L$-smooth assumptions, we are able to derive upper and lower bounds for the target, which satisfy the requirements of Theorem \ref{thm:squeeze}.
\begin{lem}\label{lem:strongly}
Let $\twid \pi_D(x)\propto\exp(\g_D(x))$ be the (unnormalized) target density, where $\g_D:\RR^d\rightarrow \RR$ is twice-differentiable, $\alpha$-strongly concave, and $L$-smooth. Call $x_D^*\defeq \argmax_x \g_D(x)$. Using $\phi_d(x;\mu,\Sigma)$ to denote the pdf of $N_d(\mu,\Sigma)$, we have for all $x$,
\[\exp(\g_D(x_D^*)) \left(2\pi/L\right)^{d/2}\phi_d(x;x_D^*,L^{-1}I)\leq \exp(\g_D(x))\leq \exp(\g_D(x_D^*)) \left(2\pi/\alpha\right)^{d/2} \phi_d(x;x_D^*,\alpha^{-1}I).\]

Furthermore, calling ${\red c_{L,D}} = \exp(\g_D(x_D^*)) \left(2\pi/L\right)^{d/2}$ and ${\red c_{U,D}}=\exp(\g_D(x_D^*)) \left(2\pi/\alpha\right)^{d/2}$, we have that ${\red c_{L,D}/c_{U,D}} = (\alpha/L)^{d/2}$, which does not depend on $D$.
\end{lem}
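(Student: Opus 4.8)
The plan is to sandwich $\g_D$ between two quadratics by exploiting that its gradient vanishes at the maximizer, and then recognize the resulting Gaussian kernels. Since $x_D^*$ maximizes the differentiable function $\g_D$, we have $\nabla \g_D(x_D^*)=0$. Combining this with the second-order characterizations of $\alpha$-strong concavity and $L$-smoothness for a twice-differentiable function, namely $-LI\preceq \nabla^2\g_D\preceq -\alpha I$ everywhere, a second-order Taylor expansion of $\g_D$ about $x_D^*$ (whose linear term drops out) will yield
\[\g_D(x_D^*)-\frac{L}{2}\lVert x-x_D^*\rVert^2\leq \g_D(x)\leq \g_D(x_D^*)-\frac{\alpha}{2}\lVert x-x_D^*\rVert^2\]
for all $x$, where the mean-value remainder is controlled above and below by the Hessian bounds.

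Next I would exponentiate this chain to obtain
\[\exp(\g_D(x_D^*))\exp\!\left(-\tfrac{L}{2}\lVert x-x_D^*\rVert^2\right)\leq \exp(\g_D(x))\leq \exp(\g_D(x_D^*))\exp\!\left(-\tfrac{\alpha}{2}\lVert x-x_D^*\rVert^2\right),\]
and then rewrite each Gaussian kernel as a normalized density. Recalling that $\phi_d(x;\mu,\sigma^{-1}I)=(2\pi)^{-d/2}\sigma^{d/2}\exp(-\tfrac{\sigma}{2}\lVert x-\mu\rVert^2)$, we get the two identities $\exp(-\tfrac{L}{2}\lVert x-x_D^*\rVert^2)=(2\pi/L)^{d/2}\phi_d(x;x_D^*,L^{-1}I)$ and $\exp(-\tfrac{\alpha}{2}\lVert x-x_D^*\rVert^2)=(2\pi/\alpha)^{d/2}\phi_d(x;x_D^*,\alpha^{-1}I)$. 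Substituting the first into the lower bound and the second into the upper bound reproduces exactly the claimed inequalities.

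Finally, with $c_{L,D}=\exp(\g_D(x_D^*))(2\pi/L)^{d/2}$ and $c_{U,D}=\exp(\g_D(x_D^*))(2\pi/\alpha)^{d/2}$, the factors $\exp(\g_D(x_D^*))$ and $(2\pi)^{d/2}$ cancel in the ratio, leaving $c_{L,D}/c_{U,D}=(\alpha/L)^{d/2}$, which depends only on the fixed curvature constants $\alpha$ and $L$ and not on $D$. I expect no genuine obstacle here: the two-sided quadratic envelope is the standard consequence of strong concavity and smoothness once one observes that the linear term vanishes at the mode, and the only real care needed is in matching the Gaussian normalization constants. It is worth stating explicitly that the lower bound uses only $L$-smoothness while the upper bound uses only $\alpha$-strong concavity, so both hypotheses enter, each exactly once.
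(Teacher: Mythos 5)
Your proof is correct and follows essentially the same route as the paper's: both sandwich $\g_D$ between the quadratics $\g_D(x_D^*)-\tfrac{L}{2}\lVert x-x_D^*\rVert^2$ and $\g_D(x_D^*)-\tfrac{\alpha}{2}\lVert x-x_D^*\rVert^2$ using $\nabla\g_D(x_D^*)=0$ together with the curvature bounds (the paper invokes the first-order strong-concavity inequality for the upper bound and a mean-value Taylor expansion with the Hessian bound for the lower bound, which is the same content as your two-sided Hessian sandwich), then exponentiate and insert the Gaussian normalizing constants. The remaining ratio computation is identical.
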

\begin{proof}
By strong concavity, we have that 
\[    {\red -\g_D(x)\geq  -g_D(x^*) - \nabla \g_D(x_D^*)^\top (x-x_D^*) + \frac{\alpha}{2} \lVert x_D^*-x\rVert_2^2=\g_D(x_D^*) + \frac{\alpha}{2} \lVert x_D^*-x\rVert_2^2,}
\]
since $\nabla \g_D(x_D^*)=0$. This implies that \[\exp(\g_D(x))\leq \exp(\g_D(x_D^*)) \exp\left(-\frac{\lVert x_D^*-x\rVert_2^2}{2(1/\alpha)}\right).\] 
Including the integrating constant for a multivariate normal distribution gives the upper bound.

Next, since $\g_D$ is $L$-smooth, we have that 
\[
    {\red -\g_D(x)= -\g_D(x_D^*) - \nabla \g_D(x_D^*)^\top (x-x_D^*) + \frac 12 (x_D^*-x)^\top\nabla^2 \g_D(\twid x)(x_D^*-x)\leq -\g_D(x_D^*) + \frac{L}{2}\lVert x_D^*-x\rVert_2^2,}
\]
where $\twid x$ is between $x_D^*$ and $x$, we used the fact that $\nabla \g_D(x_D^*)=0$, and that the eigenvalues of $\nabla^2 \g_D$ are upper bounded by $L$. This implies that 
\[\exp(\g_D(x))\geq \exp(\g_D(x_D^*)) \left(\frac{2\pi}{L}\right)^{d/2} \phi_d\left(x;x_D^*,L^{-1}I\right),\]
giving the lower bound.
\end{proof}

Given the bounds in Lemma \ref{lem:strongly}, we can now implement the squeeze-function rejection sampler of Section \ref{s:squeeze}, since {\red $c_{L,D}/c_{U,D}$} does not depend on $D$. As discussed in Proposition \ref{prop:wait}, generating these bounds is strictly easier than computing the integrating constant for the target, which is not needed in Lemma \ref{lem:strongly}. 

We could also implement the truncated sampler of Section \ref{s:truncated}, by using the bound $(\alpha/L)^{d/2}$ on the worst case acceptance probability. However, since Theorem \ref{thm:squeeze} is applicable, the truncated sampler is strictly worse as it incurs a price both in privacy as well as utility, whereas the squeeze sampler produces perfect samples.

There are many natural problem settings that fit the assumptions of Lemma \ref{lem:strongly}, particularly in empirical risk minization.

\begin{ex}[Strongly convex empirical risk minimization]\label{ex:erm}
Suppose that the database can be written as a vector $D = (d_1,\ldots, d_n)$, where $d_i$ is the contribution of individual $i$. Take as our utility function $\g_D(x) = -\left(\sum_{i=1}^n \ell(x;d_i)+r(x)\right)$, where $\ell(x;d)$ is a twice-differentiable convex function which is $L$-smooth and satisfies $\sup_{d,d'}\sup_x |\ell(x;d)-\ell(x;d')|\leq \Delta$, and $r(x)$ is an $\alpha$-strongly convex regularizer, which does not depend on the database $D$. For instance, we could take $r(x) = \frac{\alpha}{2}\lVert x\rVert_2^2$. Then the exponential mechanism samples from $\pi_D(x) \propto \exp(\frac{\ep}{2\Delta} \g_D(x))$ and satisfies $\ep$-DP. 

Note that $\g_D$ is $nL$-smooth and $\alpha$-strongly concave for all $D$, so it fits the framework of Lemma \ref{lem:strongly}. Such a setup is common in private empirical risk minmization \citep{kifer2012private,bassily2014private,bassily2014privateRevisited}, and in particular for private regression problems \citep{kifer2012private,reimherr2019kng,awan2020structure}.
\end{ex}


\subsection{KNG/Gradient Mechanism}\label{s:kng}

An alternative to simply applying the exponential mechanism to a strongly concave utility function is the $K$-norm gradient mechanism (KNG), proposed in  \citet{reimherr2019kng}, also known as the gradient mechanism \citep{asi2020instance}. KNG has been applied to applications such as geometric median estimation, and linear and quantile regression \citep{reimherr2019kng,asi2020instance}. Given an objective function $\g_D(x)$, KNG samples from $\pi_D(x)\propto \exp(-\frac{\ep}{2\Delta} \lVert \nabla \g_D(x) \rVert_K)$, where  $\Delta \geq \sup_{d(D,D')\leq1}\sup_x \lVert \nabla \g_D(x)-{\red \nabla }\g_{D'}(x)\rVert_K$, and where $\lVert \cdot \rVert_K$ is a chosen norm. 

While the exponential mechanism with a strongly concave utility is naturally approximated by a Gaussian distribution \citep{awan2019benefits}, KNG is closely related to the $K$-norm distributions \citep{reimherr2019kng}. The $K$-norm mechanism was introduced in \citet{hardt2010geometry}, and were also studied in \citet{awan2020structure}. 

\begin{defn}[$K$-norm distribution: \citet{hardt2010geometry}]
Let $\lVert \cdot \rVert_K$ be a norm on $\RR^d$, with associated unit norm ball: $K = \{x\in \RR^D\mid \lVert x \rVert_K\leq 1\}$. The $K$-norm distribution with location $m$ and scale $s$ has density 
\[f(x;m,s) = c^{-1} \exp\left(-s^{-1}\lVert x-m\rVert_K\right),\]
where $c = (d!) s^d\mathrm{Vol}(K)$.
\end{defn}

Under similar assumptions as those in \citet[Theorem 3.1]{reimherr2019kng}, Lemma \ref{lem:kngUniversal} gives upper and lower bounds which satisfy the assumptions required for Theorem \ref{thm:squeeze}.

\begin{lem}\label{lem:kngUniversal}
Let $\twid \pi_D(x)=\exp(-\lVert \nabla \g_D(x)\rVert_2)$ be the unnormalized target density, where $\g_D:\RR^d\rightarrow \RR$ is twice-differentiable, $\alpha$-strongly convex, and $L$-smooth. Call $x_D^*\defeq \argmin_x \g_D(x)$. Write $\psi_d(x;m,s)$ to denote the pdf of a $d$-dimensional $K$-norm distribution with location $m$, scale $s$, and $\ell_2$ norm. Denote $\mathrm{Vol}_d(\ell_2) = \frac{2^d \Gamma^d(1+1/2)}{\Gamma(1+d/2)}$ the volume of the unit $\ell_2$ ball in $\RR^d$. Then for all $x$, 
\[ (d!)L^{-d}\mathrm{Vol}_d(\ell_2) \psi_d(x;x_D^*,1/L)\leq \exp(-\lVert \nabla \g_D(x)\rVert_2)
\leq (d!)\alpha^{-d}\mathrm{Vol}_d(\ell_2) \psi_d(x;x_D^*,1/\alpha).\]

Furthermore, calling ${\red c_{L,D}} =(d!)L^{-d}\mathrm{Vol}_d(\ell_2) $ and ${c_{U,D}}=(d!)\alpha^{-d}\mathrm{Vol}_d(\ell_2)$, we have that ${\red c_{L,D}/c_{U,D}} = (\alpha/L)^{d}$, which does not depend on $D$.
\end{lem}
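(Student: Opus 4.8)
The plan is to reduce the two-sided bound on $\exp(-\lVert \nabla \g_D(x)\rVert_2)$ to a two-sided bound on $\lVert \nabla \g_D(x)\rVert_2$ itself, and then recognize the resulting ``exponential-in-norm'' expressions as rescaled $\ell_2$ $K$-norm densities. The engine is the same as in the proof of Lemma \ref{lem:strongly}: since $x_D^*$ minimizes the strongly convex $\g_D$, it is unique and $\nabla \g_D(x_D^*)=0$, so I would write the gradient at an arbitrary $x$ as an average of Hessians along the segment from $x_D^*$ to $x$,
\[
\nabla \g_D(x) = \nabla \g_D(x) - \nabla \g_D(x_D^*) = \left(\int_0^1 \nabla^2 \g_D\big(x_D^* + t(x-x_D^*)\big)\,dt\right)(x - x_D^*),
\]
and abbreviate the matrix in parentheses as $H$. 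By $\alpha$-strong convexity and $L$-smoothness each integrand obeys $\alpha I \preceq \nabla^2 \g_D \preceq L I$, and since this ordering is preserved under averaging, the symmetric matrix $H$ has all eigenvalues in $[\alpha,L]$.

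Next I would sandwich the gradient norm. Because $H$ is symmetric with spectrum in $[\alpha,L]$, its singular values equal its eigenvalues, giving $\alpha\lVert v\rVert_2 \le \lVert Hv\rVert_2 \le L\lVert v\rVert_2$ for every $v$; taking $v = x-x_D^*$ yields
\[
\alpha \lVert x-x_D^*\rVert_2 \le \lVert \nabla \g_D(x)\rVert_2 \le L \lVert x-x_D^*\rVert_2.
\]
Negating and exponentiating (negation reversing the inequalities, the exponential preserving them) gives
\[
\exp(-L\lVert x-x_D^*\rVert_2) \le \exp(-\lVert \nabla \g_D(x)\rVert_2) \le \exp(-\alpha\lVert x-x_D^*\rVert_2).
\]

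Finally I would identify the two outer expressions with $\ell_2$ $K$-norm densities. By definition $\psi_d(x;m,s) = c^{-1}\exp(-s^{-1}\lVert x-m\rVert_2)$ with $c = (d!)\,s^d\,\mathrm{Vol}_d(\ell_2)$; taking $s=1/L$ and solving for the bare exponential gives $\exp(-L\lVert x-x_D^*\rVert_2) = (d!)L^{-d}\mathrm{Vol}_d(\ell_2)\,\psi_d(x;x_D^*,1/L)$, and likewise with $s=1/\alpha$ for the upper bound. Substituting these into the sandwich produces the stated inequalities with $c_{L,D} = (d!)L^{-d}\mathrm{Vol}_d(\ell_2)$ and $c_{U,D} = (d!)\alpha^{-d}\mathrm{Vol}_d(\ell_2)$, whence $c_{L,D}/c_{U,D} = (\alpha/L)^d$ is manifestly independent of $D$. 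The only step needing care is the operator bound $\alpha\lVert v\rVert_2 \le \lVert Hv\rVert_2 \le L\lVert v\rVert_2$, which relies on $H$ being symmetric so that its smallest singular value equals its smallest eigenvalue; the integral-of-Hessian representation, valid by twice-differentiability, is exactly what delivers this symmetry. Everything else is bookkeeping with the $K$-norm normalization constant.
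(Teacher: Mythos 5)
Your proof is correct, and its skeleton is the same as the paper's: sandwich $\lVert \nabla g_D(x)\rVert_2$ between $\alpha\lVert x - x_D^*\rVert_2$ and $L\lVert x - x_D^*\rVert_2$, negate and exponentiate, and absorb the $K$-norm normalizing constant $c=(d!)\,s^d\,\mathrm{Vol}_d(\ell_2)$ with $s=1/L$ and $s=1/\alpha$, after which the ratio computation is immediate. The only divergence is how you obtain the sandwich. The paper gets the lower bound $\lVert \nabla g_D(x)\rVert_2\geq \alpha\lVert x-x_D^*\rVert_2$ from the gradient-monotonicity form of strong convexity, $\langle \nabla g_D(x)-\nabla g_D(x_D^*),\,x-x_D^*\rangle\geq \alpha\lVert x-x_D^*\rVert_2^2$, combined with Cauchy--Schwarz, and the upper bound directly from the Lipschitz-gradient form of $L$-smoothness, $\lVert \nabla g_D(x)-\nabla g_D(x_D^*)\rVert_2\leq L\lVert x-x_D^*\rVert_2$; neither step touches the Hessian. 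You instead write $\nabla g_D(x)=H(x-x_D^*)$ with $H=\int_0^1 \nabla^2 g_D\bigl(x_D^*+t(x-x_D^*)\bigr)\,dt$ and use $\alpha I\preceq H\preceq L I$ together with symmetry of $H$. Both are valid; yours leans on the full strength of twice differentiability (the fundamental-theorem representation and Hessian symmetry, which you correctly flag as essential for the upper operator bound $\lVert Hv\rVert_2\leq L\lVert v\rVert_2$), whereas the paper's first-order argument is more elementary and would survive even if $g_D$ were only once differentiable, with strong convexity and smoothness defined by the usual first-order inequalities. Your route does package both bounds into one clean linear-algebraic statement, but it buys no extra generality here.
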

\begin{proof}
By strong convexity, we have that 
\begin{align*}
    \alpha \lVert{\red x- x_D^*}\rVert_2^2&\leq \langle \nabla \g_D(x)-\nabla \g_D(x_D^*),x-x_D^*\rangle=\langle \nabla \g_D(x),x-x_D^*\rangle
    \leq \lVert \nabla \g_D(x)\rVert_2 \cdot  \lVert x-x_D^*\rVert_2,
\end{align*}
where we used the fact that $\nabla \g_D(x_D^*)=0$ and Cauchy-Schwartz inequality. This implies that $\lVert \nabla \g_D(x) \rVert_2 \geq \alpha \lVert x-x_D^*\rVert_2$, which gives the upper bound. 

Next, as $\g_D$ is $L$-smooth, we have that 
\[ \lVert \nabla \g_D(x)\rVert_2 = \lVert \nabla \g_D(x)-\nabla \g_D(x_D^*)\rVert_2 \leq L\lVert x-x_D^*\rVert_2,\]
which gives the lower bound. 
\end{proof}

Lemma \ref{lem:kngUniversal} provides bounds that can be used to implement the sampler of Section \ref{s:squeeze}. The acceptance probability when targeting the lower bound is $(\alpha/L)^{d}$, which is independent of $D$, as required. 
While we could implement the sampler of Section \ref{s:truncated}, as  $(\alpha/L)^{d}$ is a bound on the worst case acceptance probability, this method is strictly worse than the squeeze sampler, as discussed in Section \ref{s:strongly}. 
The empirical risk problems of Example \ref{ex:erm} are also applicable to KNG, and offer several natural instances that satisfy the assumptions of Lemma \ref{lem:kngUniversal}.

Finally, note that for the KNG mechanism, if the underlying utility is $L$-smooth (not necessarily strongly concave), then the log-density is $L$-Lipschitz. As such, we can apply the adaptive rejection sampler of Section \ref{s:adaptive}. If multiple i.i.d. samples are required, this can provide a very computationally efficient sampling method, while keeping the runtime data-independent.

\section{Discussion}

In this paper, we first characterized the privacy cost due to the {\red runtime} of both simple and adaptive rejection samplers in terms of $\ep$-DP, $(\ep,\de)$-DP, and $f$-DP. We found that the runtime of standard rejection samplers can result in an arbitrary increase in the privacy cost, motivating the need for privacy-aware samplers. We then proposed three novel modifications to simple rejection samplers with varying assumptions, which all resulted in data-independent runtime. We also developed a privacy-aware adaptive rejection sampler for log-H\"{o}lder densities.

{\red There are three factors that influence the practicality of our algorithms, (1) {\em  the scalability of rejection sampling:} Typically, the acceptance probability of a rejection samplers decays exponentially with data dimension, making them impractical for very high dimensional problems. However, imposing additional structure like log-concavity or log-H\"older on the target density, adaptive rejection samplers (like our proposed one) can be applicable to higher-dimensional problems. 
 Such structural assumptions, as well as low- to moderate-dimensional problems are common in differential privacy applications.
(2) {\em the additional cost of our differentially-private modifications of rejection sampling:}
Our algorithms typically result in a reduction in the acceptance probability to match the worst-case dataset. This is unavoidable. However, our adaptive rejection sampler does not require any knowledge of this worst-case database. As such, the sample complexity of the runtime is the same as for a regular rejection sampler, but where the acceptance probability is the worst case.
(3) {\em The additional book-keeping overhead in implementing our differentially private rejection samplers:} All of our algorithms are minor modifications of existing simple or adaptive rejection samplers, and as such, this overhead is minimal.}

Of our proposed modifications to the rejection sampler, the squeeze method of Section \ref{s:squeeze} is the most powerful. We showed in Section \ref{s:examples} that for many instances of the exponential mechanism, appropriate upper and lower bounds can be generated. Furthermore, our adaptive sampling scheme is also built on the squeeze sampler. In a way, Algorithm \ref{alg:squeeze} can be viewed as a coupling of the sampler applied to the present database and a worst case database. It is an open question whether similar couplings could be developed with even weaker assumptions. 

An alternative to rejection sampling is \emph{coupling from the past} (CFP) \citep{propp1998coupling}, a modified MCMC approach. The benefit of CFP is that it is another perfect sampler, and could be a useful technique in designing privacy-aware samplers. The techniques used in this paper may be useful for determining the privacy cost of timing channel attacks on CFP and developing CFP algorithms with data-independent runtime. A variation on CFP is \emph{perfect tempering}, which also results in perfect samplers \citep{moller1999perfect,daghofer2004perfect,brooks2006perfect}, and may be an another approach to developing privacy-aware samplers.

{\red While in this paper we developed samplers whose runtime does not depend on the dataset, one could instead ask for the runtime to be differentially private by itself. Theorem \ref{thm:privacyCost} shows that a naive rejection sampler does have an inherent privacy cost, but one could also imagine altering the runtime to give a stronger privacy guarantee. A significant challenge with this approach is that we can only increase, but not reduce, the runtime. Due to this constraint, many existing DP techniques are not applicable. We leave it for future research to investigate mechanisms to privatize the runtime.}

\section*{Acknowledgments}
{\red This work was supported in part by the National Science Foundation (NSF) grants SES 2150615, DMS 1812197 and IIS 1816499. }

\bibliography{Bibliography.bib}
\end{document}